\newtheorem{theorem}{Theorem}
\theoremstyle{plain}
\newtheorem{corollary}{Corollary}
\newtheorem{lemma}{Lemma}
\newtheorem{proposition}{Proposition}
\newtheorem{remark}{Remark}
\numberwithin{equation}{section}
\begin{document}

\title[Pattern Formation and Dynamic Transition for MHD]{Pattern Formation and Dynamic Transition for Magnetohydrodynamic Convection}

\author[Sengul]{Taylan Sengul}
\address[TS]{Department of Mathematics,
Indiana University, Bloomington, IN 47405}
\email{msengul@indiana.edu}

\author[Wang]{Shouhong Wang}
\address[SW]{Department of Mathematics,
Indiana University, Bloomington, IN 47405}
\email{showang@indiana.edu, http://www.indiana.edu/~fluid}

\thanks{The work  was supported in part by the
Office of Naval Research and by the National Science Foundation.}

\keywords{Magnetohydrodynamic Convection, dynamic transition, pattern formation, hexagonal pattern, metastability}
\subjclass{76W05, 35Q35, 35B36}

\date{\today}

\begin{abstract}
The main objective of this paper is to describe the dynamic transition of the incompressible MHD equations in a three dimensional (3D) rectangular domain from a perspective of pattern formation. First it is shown that the system always undergoes a dynamical transition as the Rayleigh number crosses the first critical threshold which corresponds to the first eigenvalues of the linearized problem around the basic state. The type of transitions is determined by the nonlinear interactions.

When the magnetic Prandtl number $\mathfrak{p}_2\geq1$ or the Chandrasekhar number $Q<Q_0$ where $Q_0$ depends on the system  parameters, the first eigenvalues are always real. Generically, the first eigenvalue is simple and a Type-I or a Type-II transition is possible depending on a non-dimensional number exactly given in terms of the system parameters. In particular, the rolls are stable patterns after the transition if the Chandrasekhar number $Q>Q_{\ast}$ or the magnetic Prandtl number $\mathfrak{p}_2>\mathfrak{p}_{\ast}$ where $Q_{\ast}$ and $\mathfrak{p}_{\ast}$ depends on the length scales of the domain. When the horizontal length scales $L_1$ and $L_2$ satisfy $\sqrt{3}kL_1=jL_2$ for some positive integers $j$ and $k$, two modes, characterizing a hexagonal pattern, become unstable simultaneously. In this case, we show that all three types of transition are possible. Depending on two non-dimensional parameters, we show that there are eight different transition scenarios. In this case we also show that for $\mathfrak{p}_2\geq 8$, the transition is always Type-I with rolls and rectangles as stable patterns and hexagons as unstable patterns after the transition. However $\mathfrak{p}_2\geq 8$ is a crude estimate and our numerical investigation suggests that this type of Type-I transition will be preferred for $\mathfrak{p}_2\geq\mathfrak{p}_2^{\ast}$ where $\mathfrak{p}_2^{\ast}<2.24$.

In the case where  $\mathfrak{p}_2<1$ and $Q>Q_0$, the first eigenvalues are complex, and the system undergoes a dynamical transition to spatiotemporal oscillatory states. In particular, when the first critical modes have time periodic roll structure, we show that the system undergoes a Type-I or Type-II transition. In the large Chandrasekhar number limit or the small oscillation frequency limit, the transition is always Type-I and the time periodic rolls are stable after the transition.
\end{abstract}

\maketitle

\section{Introduction}
The Rayleigh--B\'{e}nard convection is a fundamental problem of natural
convective heat transfer which is characterized by a vertical temperature
gradient aligned with the acceleration of gravity being maintained over a
horizontal layer of fluid. Due to the thermal expansion, the fluid is
heavier at the top and lighter at the bottom. As the temperature difference
between the lower and upper fluid boundary exceeds a critical level, a
convective motion sets in. There have been numerous studies concerning the stability and instability of the convective flows.  A detailed discussion can be found in \cite{chandrasekhar,cross93,getling1998rayleigh, kosch}.

External magnetic fields change the characteristics of this convection
significantly for electrically well conducting fluids. First it is well
known that the critical Rayleigh number and the wave number increase with
an increasing Chandrasekhar number $Q$ for the onset of convection. Physically
this is due to the fact that the energy released by the buoyancy force
acting on the fluid must balance the energy dissipated by not only the
viscosity but also the Joule heating. Thus the magnetic field imposes stability to the fluid. Second, the existence of a magnetic field
allows both the steady and oscillatory convections; see \cite{chandrasekhar,
Proctor1982}.

We aim to describe the dynamic stability and transition of the magnetic
convection for an incompressible fluid in a rectangular domain in $\mathbb{R}^{3}$ from a pattern formation perspective. As is well known, for the magnetohydrodynamics (MHD) equations, due to non-selfadjoint linear operator,  the transition can be caused by a finite set of real or complex eigenvalues crossing the imaginary axis. This makes transitions to both spatially periodic time independent states and to spatiotemporally periodic states possible. We focus on the formation of patterns having roll, rectangular and hexagonal structures. Our main goal is to precisely determine the type of transitions associated with these patterns and hence characterize the stability of these patterns in terms of the parameters of the system.

When the first eigenvalue is real and simple, the transition can only be Type-I or Type-II depending on a number exactly given in terms of the system parameters. In particular, when the first critical eigenmode has a roll structure, the type of transition is independent of the Prandtl number $\mathfrak{p}_1$. The transition, in this case, is always Type-I if the Chandrasekhar number $Q>Q_{\ast}$ or the magnetic Prandtl number $\mathfrak{p}_2>\mathfrak{p}_{\ast}$ where $Q_{\ast}$ and $\mathfrak{p}_{\ast}$ depends on the length scales of the domain. We find that $Q_{\ast}<307$ and $\mathfrak{p}_{\ast}<2.24$ regardless of the length scales of the domain and  $Q_{\ast}\rightarrow4\pi^2$ and $\mathfrak{p}_{\ast}\rightarrow2/\sqrt{3}$ as $\max\{L_1,L_2\}\rightarrow\infty$ where $L_1$ and $L_2$ are the horizontal length scales.

Next we study the case where there are two critical real eigenvalues. In this case we only consider the special geometry 
\[
\frac{L_1}{L_2}=\frac{j}{k \sqrt{3}}. \] with positive integers $j$, $k$ and $L_1$, $L_2$ denoting the horizontal length scales of the box.
With this assumption, it is possible that  two modes which can characterize a hexagon pattern become unstable at the same critical parameter. In this case we find that all types of transitions are possible in a total of eight different transition scenarios. However, in our numerical investigation, we encountered only two of these scenarios. Fixing  $Q<Q_{\ast}$, we found that the system moves from a Type-III transition regime to a Type-I regime as $\mathfrak{p}_2$ crosses $\mathfrak{p}_{\ast}$.  In the Type-I transition regime, the transition state is an attractor homeomorphic to a circle containing eight steady states and the connecting heteroclinic orbits. Of these eight steady states, those having roll and rectangular patterns are stable whereas the ones with hexagonal pattern are saddles. In the Type-III transition regime, a neighborhood of the basic solution is split into two sectorial regions $U_1$ and $U_2$. In $U_1$, the initial conditions move  out of this neighborhood whereas in $U_2$, the initial conditions will move to an attractor which contains a rectangular pattern as a minimal attractor. In this case, we can single out the transition scenario if $\mathfrak{p}_2\geq 8$. We prove that when $\mathfrak{p}_2\geq8$, the transition is always Type-I with rolls and rectangles as stable states while hexagons as unstable states after the transition. However $\mathfrak{p}_2\geq 8$ is a crude estimate and our numerical investigation suggests that this type of Type-I transition will be preferred for $\mathfrak{p}_2\geq\mathfrak{p}_2^{\ast}$ where $\mathfrak{p}_2^{\ast}<2.24$. It is worth mentioning that when the length scales are in close vicinity of the above relation, generically only one of the eigenvalues that characterize a hexagonal structure will be unstable. However a second negative eigenvalue will be very close to zero at the critical parameter and the structure of transition described above case will still persist.

Finally, we consider the case where a pair of complex eigenvalues become unstable simultaneously. In this case, we only consider a roll type critical eigenmode. We show that the first transition can be Type-I or Type-II. In particular for $Q$ sufficiently large or for the oscillation frequency $\rho$ sufficiently small the transition is Type-I and the transition structure is a time periodic roll pattern.

There have been extensive studies on the MHD convection problem, see among others  \cite{chandrasekhar, Proctor1982}  and the references therein. There are several features of this work that distinguishes it from the other studies.

First, to our knowledge none of the previous studies deals with nonlinear  3D equations in a finite box. As we will discuss, the presence of side walls brings in additional difficulties and interesting dynamical features. From a physical point of view the addition of side walls is crucial especially when the horizontal scales are  comparable to the depth of the layer.

Second, the problem is studied from a transition point of view instead of bifurcation point of view. The key philosophy of the dynamic transition theory which is recently
developed by Ma \& Wang \cite{b-book,ptd} is to search for the full set of transition
states, giving a complete characterization on stability and transition. The
set of transition states is represented by a local attractor rather than
some steady states or periodic solutions or other types of orbits as part of
this local attractor.

Third, a crucial step in describing the transition of the system is the reduction of the problem on the center manifold. In this paper we use a slightly different method for the reduction of the problem which is worth mentioning here. Usually, one expands the center manifold in terms of the eigenfunctions of the linear operator of the system. This allows the utilization of the approximations for the center manifold in a natural way. However, as in the case of the magnetohydrodynamics convection, this approximation can be difficult to explicitly calculate when the eigenfunctions of the linear operator of the original system is complicated for computational purposes. We overcome this difficulty by defining a new set of basis for the functional space of the problem which behaves nicely under the original linear operator.

The paper is organized as follows, in Section 2, the mathematical setting of the problem is introduced. In  Section 3, linear theory is summarized. In Section 4 we state the main theorems. We give the proofs of the main theorems in Section 5.  In Section 6, physical remarks and conclusions are discussed.  
\section{Mathematical Setting}

We consider thermally driven convection of an electrically conducting fluid
in the presence of a magnetic field in a rectangular domain $\Omega
=(0,l_{1})\times (0,l_{2})\times (0,h)$ in $\mathbb{R}^{3}$. Subject to
Boussinesq approximation (see \cite{chandrasekhar} among others), the
evolution equations read:
\begin{equation}
\label{NondimEqu1}
\begin{aligned} 
& \frac{\partial u}{\partial t}+(u\cdot \nabla)u =
-\frac{1}{\rho _{0}} (\nabla p+\rho g\vec{k})+
\mu_{0}\left( \nabla \times H\right) \times H+\nu \Delta u,\\ 
& \frac{\partial H}{\partial t}+\left( u\cdot \nabla \right) H = 
\left( H\cdot \nabla \right)u+\eta \Delta H,\\ 
& \frac{\partial T}{\partial t}+\left( u\cdot \nabla\right) T =
\kappa \Delta T,\\ 
& \text{div} u = \text{div} H=0. 
\end{aligned}
\end{equation}
Here $u=\left( u_{1},u_{2},u_{3}\right) $ is the velocity field, $H$ is the
magnetic field, $T$ is the temperature, $\vec{k}=\left( 0,0,1\right) $, $g$
is the gravitational acceleration, $\nu $ is the kinematic viscosity, $\mu
_{0}$ is the magnetic permeability, $\eta $ is the magnetic diffusivity
(also called resistivity), $\kappa $ is the thermal diffusivity, $T_{0}$ is
the reference temperature at $x_{3}=0$, $\rho _{0}$ is the density at $T_{0}$,
and $a>0$ is the coefficient of thermal expansion. The fluid density $\rho$ is 
given by the equation of state:
\begin{equation}
\rho=\rho_{0}[1-a(T-T_{0})].
\label{equofstate}
\end{equation}

Although the case where the imposed magnetic field and the gravitational
field act in different directions is also interesting, for simplicity, we
will assume that they are parallel. Denoting the temperature at $x_{3}=h$ by 
$T_{1}$, the basic state of (\ref{NondimEqu1}) is a motionless state given by:
\begin{align*}
\overline{u} &=0, \\
\overline{T} &=T_{0}+\left( T_{1}-T_{0}\right) \frac{x_{3}}{h}, \\
\overline{H} &=H_{0}\overrightarrow{k}, \\
\overline{p} &=p_{0}-\rho _{0}g\left( x_{3}+a\left( T_{0}-T_{1}\right) \frac{%
x_{3}^{2}}{2h}\right) .
\end{align*}%

To put the equations into non-dimensional form, we consider the deviation
of the solution from the basic state:%
\begin{equation*}
u^{\prime \prime }=u-\overline{u},\qquad T^{\prime \prime }=T-\overline{T}%
,\qquad H^{\prime \prime }=H-\overline{H},\qquad p^{\prime \prime }=p-%
\overline{p},
\end{equation*}
and we set
\begin{align*}
&u^{\prime \prime }=\frac{\kappa }{h}u^{\prime },\qquad 
& &H^{\prime \prime}=\frac{\kappa }{\eta }H_{0}H^{\prime }, \qquad 
& &T^{\prime \prime}=\left(T_{0}-T_{1}\right) T^{\prime }, \\
& x=hx^{\prime }, 
& &p^{\prime \prime }=\frac{\rho _{0}\nu \kappa }{h^{2}}%
p^{\prime }, & &t=\frac{h^{2}}{\kappa}t^{\prime }.
\end{align*}
Also we define the following nondimensional numbers:
\begin{equation}  
\label{Nondimensional-numbers}
\begin{aligned} 
& \mathfrak{p}_1 =\nu /\kappa & \text{\qquad the Prandtl number,} \\
& \mathfrak{p}_2 =\eta /\kappa & \text{ \qquad the magnetic Prandtl number,} \\ 
& R =\frac{ga\left( T_{0}-T_{1}\right) }{\kappa \nu }h^{3} &\text{ \qquad the Rayleigh number,} \\ 
& Q =\mu _{0}\frac{H_{0}^{2}h^{2}}{\nu \eta } & \text{ \qquad the Chandrasekhar number.} 
\end{aligned}
\end{equation}
Using the identity
\begin{equation*}
\left( \nabla \times H\right) \times H=-\frac{1}{2}\nabla \left\vert
H\right\vert ^{2}+\left( H\cdot \nabla \right) H,
\end{equation*}%
omitting the primes and denoting all the terms that can be written as
gradients by $p$, the equations (\ref{NondimEqu1})-(\ref{equofstate}) take the form:
\begin{equation}  \label{Eqn1}
\begin{aligned} 
&\frac{\partial u}{\partial t}+(u\cdot \nabla )u =
\mathfrak{p}_1\left(-\nabla p+RT \overrightarrow{k}+\Delta u+Q\frac{\partial H}{\partial x_{3}}+\frac{Q}{ \mathfrak{p}_2}\left( H\cdot \nabla \right) H\right) , \\
&\frac{\partial H}{\partial t}+\left( u\cdot \nabla \right) H =
\left( H\cdot \nabla \right) u+\mathfrak{p}_2\left( \frac{\partial u}{\partial x_{3}}+\Delta H\right) , \\
&\frac{\partial T}{\partial t}+\left( u\cdot \nabla \right) T = \Delta T+u_{3}, \\ 
& \text{div} u = \text{div} H=0. 
\end{aligned}
\end{equation}

The non-dimensional domain is $\Omega =\left( 0,L_{1}\right) \times \left(
0,L_{2}\right) \times \left( 0,1\right) $\ where $L_{1}=l_{1}/h$, $%
L_{2}=l_{2}/h$. We use the following idealized boundary conditions which are the free-slip boundary conditions for the velocity together with the condition that $H$
remains vertical at $x_{3}=0,1$.
\begin{equation}  
\label{B.C.}
\begin{aligned} 
& u_{1} =\frac{\partial u_{2}}{\partial x_{1}}=\frac{\partial u_{3}}{\partial x_{1}}= \frac{\partial T}{\partial x_{1}}=
H_{1}=\frac{\partial H_{2}}{\partial x_{1}}=\frac{\partial H_{3}}{\partial x_{1}}=0  & \text{at } &x_{1}=0,L_{1}, \\
& \frac{\partial u_{1}}{\partial x_{2}} =u_{2}=\frac{\partial u_{3}}{\partial x_{2}}=\frac{\partial T}{\partial x_{2}}=
\frac{\partial H_{1}}{\partial x_{2}}=H_{2}=\frac{\partial H_{3}}{\partial x_{2}}=0  & \text{at } &x_{2}=0,L_{2}, \\
&\frac{\partial u_{1}}{\partial x_{3}} =\frac{\partial u_{2}}{\partial x_{3}}=
u_{3}=T=H_{1}=H_{2}=\frac{\partial H_{3}}{\partial x_{3}}=0  & \text{at } &x_{3}=0,1. 
\end{aligned}
\end{equation}

We recall here the functional setting of (\ref{Eqn1})-(\ref{B.C.}) and refer
the interested readers to \cite{temam88}.
\begin{align*}
&H =\left\{ \psi=( u,H,T) \in L^{2}\left( \Omega \right) ^{7}\mid \text{div}u=\text{div}H=
u\cdot n\mid _{\partial \Omega }=H\cdot n\mid _{\partial \Omega }=0\right\} , \\
&H_{1} =\left\{ \psi=( u,H,T) \in H^{2}\left( \Omega \right) ^{7}\cap H\mid \psi 
\text{ satisfies (\ref{B.C.})}\right\} .
\end{align*}%
Let $L_{R}=A+B_{R}:H_{1}\rightarrow H$ and $G:H_{1}\rightarrow H$ be defined by
\begin{equation}\label{operators}
\begin{aligned}
&A\psi =\mathcal{P}\left( \mathfrak{p}_1\Delta u+\mathfrak{p}_1Q\frac{\partial H}{\partial x_{3}},%
\mathfrak{p}_2\frac{\partial u}{\partial x_{3}}+\mathfrak{p}_2\Delta H,\Delta T\right) , \\
&B_{R}\psi =\mathcal{P}\left( \mathfrak{p}_1RT\overrightarrow{k},0,u_{3}\right) , \\
&G\left( \psi \right) =\mathcal{P}\left( \frac{Q\mathfrak{p}_1}{\mathfrak{p}_2}\left(
H\cdot \nabla \right) H-(u\cdot \nabla )u,\left( H\cdot \nabla \right)
u-\left( u\cdot \nabla \right) H,-\left( u\cdot \nabla \right) T\right) .
\end{aligned}
\end{equation}
Here $\mathcal{P}:L^{7}\left( \Omega \right) \rightarrow H$ is the Leray
projector. For simplicity, we also use $G$ to represent the following trilinear form for $\psi _{i}=\left( u_{i},T_{i},H_{i}\right) \in H$, 
\begin{equation}\label{trilinear}
\begin{aligned}
G\left( \psi _{1},\psi _{2} ,\psi _{3}\right) =&\int_{\Omega}\left[
 \frac{Q\mathfrak{p}_1}{\mathfrak{p}_2}(H_1\cdot \nabla ) H_2-(u_1\cdot \nabla )u_2\right]\cdot u_3 \\
& +\int_{\Omega}\bigg[( H_1\cdot \nabla)
u_2-\left( u_1\cdot \nabla \right) H_2 \bigg]\cdot H_3-\left( u_1\cdot \nabla \right) T_2\, T_3.
\end{aligned}
\end{equation}
We also define:
\begin{equation} \label{Gs}
 G_s(\psi_1,\psi_2,\psi_3)= G(\psi_1,\psi_2,\psi_3)+ G(\psi_2,\psi_1,\psi_3).
\end{equation}
Now (\ref{Eqn1})-(\ref{B.C.}) can be written in the following functional form:
\begin{equation}
\begin{aligned}
&\frac{d\psi }{dt} =L_{R}\psi +G\left( \psi \right) , \\
&\psi \left( 0\right) =\psi _{0}.
\end{aligned}
\end{equation}

\section{Principle of Exchange of Stabilities}

The linear stability is determined by the critical crossing of the first
eigenvalues from the imaginary axis which is called the principle of
exchange of stabilities (PES). The linear theory associated with (\ref{Eqn1}%
)-(\ref{B.C.}) is well known and can be found for example in \cite%
{chandrasekhar}. Here we will briefly recall the necessary results.

First we consider the following eigenvalue problem:
\begin{equation}  \label{Lin-EV-Cpct}
L_R \psi=\beta(R)\psi,
\end{equation}
where $L_R$ is as defined in \eqref{operators}. Let us denote the set of admissible indices by
\[Z=\left\{ \left( j_1,j_2,j_3\right) \in \mathbb{Z}%
^{3}:j,k\geq 0\text{, }j_1^{2}+j_2^{2}\neq 0\text{, }j_3\geq 1\right\}. \notag
\]
It is well-known that the eigenvalues of \eqref{Lin-EV-Cpct} having indices $J=(j_1,j_2,j_3)\in Z$ satisfy the following polynomial equation:
\begin{align}
& \beta ^{3}+b_{2}^{J}\beta ^{2}+b_{1}^{J}\left( R\right) \beta
+b_{0}^{J}\left( R\right) =0,   \label{polynom} \\
\intertext{where}
& b_{2}^{J} =\left( \mathfrak{p}_1+\mathfrak{p}_2+1\right) \gamma _{J}^{2}, \notag \\
& b_{1}^{J} =\mathfrak{p}_1[( \mathfrak{p}_2+1+\frac{\mathfrak{p}_2}{\mathfrak{p}_1})
\gamma _{J}^{4}+\mathfrak{p}_2Q\left( j_3\pi \right) ^{2}-R\frac{\alpha
_{J}^{2}}{\gamma _{J}^{2}}] , \notag \\
& b_{0}^{J} =\mathfrak{p}_1\mathfrak{p}_2( \gamma _{J}^{6}+Q\left( j_3\pi )
^{2}\gamma _{J}^{2}-\alpha _{J}^{2}R\right) \notag .
\end{align}
Here 
\begin{equation}\label{wavenumbers}
\begin{aligned}
& \alpha _{J}^{2}=( j_1^{2}L_{1}^{-2}+j_2^{2}L_{2}^{-2}) \pi ^{2}, \\
& \gamma _{J}^{2}=\alpha _{J}^{2}+j_3^2\pi^2.
\end{aligned}
\end{equation}
As the linear operator \eqref{Lin-EV-Cpct} is not self-adjoint, the onset of the instability may correspond to either real or complex eigenvalues $\beta$, leading to transitions to states with either time independent structures or time periodic structures. We will denote the critical Rayleigh number by $R_r$ when the first eigenvalues are real, and by $R_c$ when the first eigenvalues are complex. Solving \eqref{polynom} for $R$ when $\beta=0$ and $Re\beta=0$ and minimizing over all admissible indices, we can find the expressions for $R_r$ and $R_c$ as follows:
\begin{align}
& R_{r} =\underset{J \in Z}{\min} 
\frac{\gamma  _{J}^{2}}{\alpha _{J}^{2}}
\left( \gamma_{J}^{4}+Q\left( j_3\pi \right) ^{2}\right),  \label{Rr} \\
\begin{split} \label{Rc}
& R_{c} =\underset{J \in Z}{\min} 
\frac{\left( \mathfrak{p}_2+1\right) \left( \mathfrak{p}_1+\mathfrak{p}_2\right)}{\mathfrak{p}_1} 
\frac{\gamma _{J}^{2}}{\alpha _{J}^{2}}
[ \gamma_{J}^{4}+\frac{\mathfrak{p}_2\mathfrak{p}_1}{\left( \mathfrak{p}_2+1\right) \left(\mathfrak{p}_1+1\right) }Q\left( j_3\pi \right) ^{2}]. 
\end{split}
\end{align}

The difference between an unbounded domain and a bounded one is that in the unbounded case the minimums in \eqref{Rr} and \eqref{Rc} are taken over all $\alpha_J>0$ whereas in the bounded case $\alpha_J$ has the specific form given above.
 Figure \ref{Figureminimizers} shows the minimizing indices of $R_r$ as a function of the length scales $L_1$ and $L_2$.
\begin{figure}
\subfigure[$Q=0$]{
\includegraphics[scale=0.5]{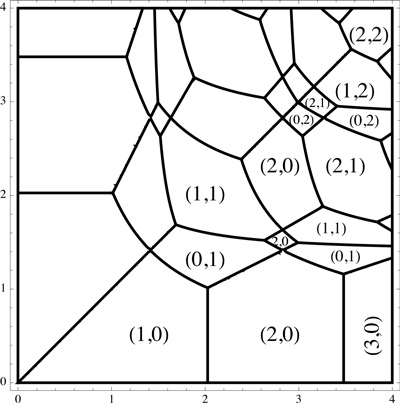}
}
\subfigure[$Q=10$]{
\includegraphics[scale=.4]{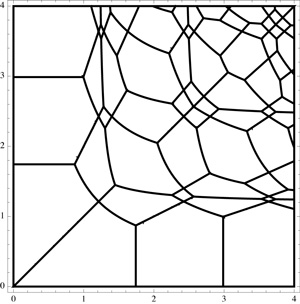}
}
\caption{Axes show the length scales $L_1$ and $L_2$. The numbers indicate the indices $(j_r,k_r)$ which minimize \eqref{Rr}.} 
\label{Figureminimizers}
\end{figure}

We note that for each $J\in Z$ there are three eigenvalues of \eqref{polynom} which we can order as $\beta^1_J\geq \beta^2_J\geq\beta^3_J$. The next proposition states that the principle of exchange of stabilities (PES) condition is valid.
\begin{proposition}\cite{chandrasekhar}\label{PrExSt}
For $\mathfrak{p}_2<1$, there exists $0<Q_{0}<\infty $ depending on $\mathfrak{p}_1$ and $%
\mathfrak{p}_2$ such that
\begin{equation}\label{Q0}
Q_0<\pi^2\frac{\mathfrak{p}_2(\mathfrak{p}_1+1)}{\mathfrak{p}_1(1-\mathfrak{p}_2)}.
\end{equation}
The following assertions hold true:

a) If $\mathfrak{p}_2\geq 1$ or $Q<Q_{0}$ then $R_r<R_c$ and $R_r$ is the first critical Rayleigh number. That is, there is a finite set of critical indices $\mathcal{C}$ minimizing \eqref{Rr} such that
\begin{equation}\label{PES1}
\begin{aligned}
& \beta _{J}^{1}\left( R\right) 
\left\{ 
\begin{array}{c}
>0\text{ if }R<R_{r}, \\ 
=0\text{ if }R=R_{r}, \\ 
<0\text{ if }R>R_{r}
\end{array} \right. 
&& \forall J\in\mathcal{C}, \\
& Re\beta_{J} \left( R_{r}\right) <0 &&\forall J \notin \mathcal{C}.
\end{aligned}
\end{equation}

b) If $\mathfrak{p}_2<1$ and $Q>Q_{0}$ then $R_c<R_r$ and $R_c$ is the first critical Rayleigh number. That is, there is a finite set of critical indices $\mathcal{C}$ minimizing \eqref{Rc} such that
\begin{equation} \label{PES2}
\begin{aligned}
& Re \beta _{J}^{1}(R) =Re\beta _{J}^{2}( R)
\left\{ 
\begin{array}{c}
>0\text{ if }R<R_{c}, \\ 
=0\text{ if }R=R_{c}, \\ 
<0\text{ if }R>R_{c}
\end{array}\right. 
&& \forall J\in\mathcal{C},\\
& Re\beta_J \left( R_{c}\right) <0 &&\forall J \notin \mathcal{C}.
\end{aligned}
\end{equation}

\end{proposition}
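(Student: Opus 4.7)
The plan is to diagonalize $L_R$ using the Fourier basis compatible with \eqref{B.C.}, reducing the spectral problem to the family of cubics \eqref{polynom} indexed by $J\in Z$, then to read off $R_r$ and $R_c$ from the zero-eigenvalue and Hopf conditions respectively, and finally to compare them as functions of $Q$. The idealized boundary conditions \eqref{B.C.} are arranged precisely so that trigonometric tensor products indexed by $J$ diagonalize $L_R$; within each mode, incompressibility reduces $L_R$ to a $3\times 3$ block on the $(u_3,H_3,T)$-amplitudes, modulo a decoupled vertical vorticity/current block whose eigenvalues remain strictly negative and non-critical. Its characteristic polynomial is \eqref{polynom}. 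A real eigenvalue crosses zero iff $b_0^J(R)=0$, yielding $R_r(J)$. A complex pair $\pm i\omega$ (with third real root $-b_2^J<0$) crosses the imaginary axis iff the Routh--Hurwitz boundary $b_0^J=b_1^J b_2^J$ holds together with $b_1^J=\omega^2>0$; solving the equality for $R$ gives $R_c(J)$ as in \eqref{Rc}, and a short substitution reduces the constraint $b_1^J(R_c(J))>0$ to $Q>Q_J^*$ with $Q_J^*:=\mathfrak{p}_2(\mathfrak{p}_1+1)\gamma_J^4/(\mathfrak{p}_1(1-\mathfrak{p}_2)(j_3\pi)^2)$.

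When $\mathfrak{p}_2\geq 1$ the factor $1-\mathfrak{p}_2\leq 0$ forbids $Q>Q_J^*$ for every admissible $J$ and every positive $Q$, so no mode admits an oscillatory onset and the first instability must be a real crossing at $R=R_r$, giving (a) in this regime. For $\mathfrak{p}_2<1$, a per-mode comparison of the affine-in-$Q$ functions $R_r(J)$, $R_c(J)$ shows $R_r(J)<R_c(J)\iff Q<Q_J^*$. Setting $Q_0:=\inf\{Q>0:R_c(Q)\leq R_r(Q)\}$ (with $R_c(Q)$ the minimum of $R_c(J)$ over those $J$ satisfying $Q>Q_J^*$), continuity in $Q$ together with $R_r(0)<R_c(0)$ (from $(\mathfrak{p}_2+1)(\mathfrak{p}_1+\mathfrak{p}_2)>\mathfrak{p}_1$) and the slope inequality $\mathfrak{p}_2(\mathfrak{p}_1+\mathfrak{p}_2)/(\mathfrak{p}_1+1)<1$ produce a finite $Q_0\in(0,\infty)$ at which the two curves cross. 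Parts (a) and (b) then follow from this definition, with the sign statements for $\beta_J^1(R)$ in \eqref{PES1}--\eqref{PES2} obtained by implicit differentiation of \eqref{polynom} at the simple crossing, yielding $d\beta_J^1/dR|_{R_r}=-\partial_R b_0^J/b_1^J(R_r)$ and an analogous formula in the complex case, combined with continuity of $\operatorname{Re}\beta_J$ in $R$ for the subcritical modes $J\notin\mathcal{C}$.

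The main obstacle will be verifying the explicit upper bound $Q_0<\pi^2\mathfrak{p}_2(\mathfrak{p}_1+1)/(\mathfrak{p}_1(1-\mathfrak{p}_2))$. A direct per-mode argument yields only $Q_0\leq\min_J Q_J^*$, and since admissibility forces $\gamma_J^4/(j_3\pi)^2>\pi^2$ for every $J\in Z$, each individual $Q_J^*$ strictly exceeds the asserted bound. To close this gap one must exploit the fact that the minimizers of $R_r$ and $R_c$ can be distinct modes, so the crossover of the two infima can occur at a strictly smaller $Q$ than any per-mode crossover; the natural route is a careful asymptotic argument in the large-box limit, where admissible $\alpha_J$ become dense and $\min\gamma_J^4/(j_3\pi)^2\to\pi^2$, combined with a perturbative estimate quantifying how the crossover shifts for finite $L_1,L_2$.
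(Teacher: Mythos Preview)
The paper does not actually prove this proposition: it is attributed to Chandrasekhar, and the only argument the paper supplies is the one-sentence remark that the bound \eqref{Q0} follows from requiring $\rho^2>0$ in the explicit formula \eqref{rho} for the oscillation frequency at $R=R_c$. Your overall framework---reducing to the cubic \eqref{polynom}, reading off $R_r$ from $b_0^J=0$ and $R_c$ from the Routh--Hurwitz boundary $b_0^J=b_1^J b_2^J$ with $b_1^J>0$, and deducing that no oscillatory branch exists when $\mathfrak{p}_2\geq 1$---is exactly the standard derivation and matches what the paper sketches above the proposition. So for parts (a) and (b) your approach is correct and essentially the same as the paper's.

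Where you and the paper diverge is the upper bound \eqref{Q0}, and here your diagnosis is sharper than the paper's own justification. You correctly observe that the per-mode threshold $Q_J^*=\mathfrak{p}_2(\mathfrak{p}_1+1)\gamma_J^4/(\mathfrak{p}_1(1-\mathfrak{p}_2)(j_3\pi)^2)$ always \emph{exceeds} $\pi^2\mathfrak{p}_2(\mathfrak{p}_1+1)/(\mathfrak{p}_1(1-\mathfrak{p}_2))$ because $\gamma_J^4/(j_3\pi)^2>\pi^2$ for every $J\in Z$. The paper's remark ``utilizing $\rho^2>0$'' is literally the condition $Q>Q_{J_c}^*$ at the critical oscillatory index, which on its face yields the \emph{lower} bound $Q_0\geq Q_{J_c}^*>\pi^2\mathfrak{p}_2(\mathfrak{p}_1+1)/(\mathfrak{p}_1(1-\mathfrak{p}_2))$, i.e.\ the reverse of \eqref{Q0}. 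Your proposed repair---exploiting that the minimizers of $R_r$ and $R_c$ may be different modes so that the crossover of the infima could occur earlier---does not close this gap: it can only push $Q_0$ \emph{below} $\min_J Q_J^*$, but every $Q_J^*$ already sits above the asserted bound, so no such argument can force $Q_0$ below it. The large-box limit likewise only drives $\gamma_J^4/\pi^2\to\pi^2$ from above, never reaching it. In short, the inequality \eqref{Q0} as written appears to have the wrong sign (and the paper later uses it in the $<$ direction when bounding $\langle\psi_J,\psi_J^\ast\rangle$, so it is not a mere typo in the statement); you should not expect to prove it as stated, and your instinct that something is off is correct.
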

There is no simple formula for
the critical Chandrasekhar number $Q_{0}$ which separates the stationary and oscillatory convection regimes. $Q_{0}$ is determined by either
the vanishing of the oscillation frequency $\rho $ (the imaginary part of the first critical complex eigenvalues at $R=R_c$) given by%
\begin{equation}\label{rho}
\rho ^{2}=\mathfrak{p}_1\mathfrak{p}_2 [-\frac{\mathfrak{p}_2\gamma _{J_{c}}^{4}}{\mathfrak{p}_1}+%
\frac{\left( 1-\mathfrak{p}_2\right) Q\pi ^{2}}{\mathfrak{p}_1+1}], \quad J_c\in\mathcal{C} .
\end{equation}
Utilizing $\rho^2>0$, one establishes the estimate \eqref{Q0}. 
Also it is worth mentioning that in \cite{banerjee1985principle}, the authors prove that $Q_0\mathfrak{p}_2>\pi^2$.

\section{Dynamic Transitions}
With the PES condition at our disposal, the dynamic transition theorem in \cite{ptd} ensures that the system always undergoes a dynamic transition to one of the three types (Type-I, Type-II or Type-III) as the Rayleigh number crosses the first critical value $R_r$ or $R_c$. The type of transition and associated pattern formation are then determined by studying the nonlinear interactions of the problem.

We address in this section the dynamic transition and the pattern formation of (\ref{Eqn1})-(\ref{B.C.}).
Our main results will depend on two nondimensional numbers $a$ and $b$ which are defined as follows:
\begin{equation}\label{a,b}
\begin{aligned}
& a= \pi^4 Q(\pi^2-5 \alpha_J^2)-\mathfrak{p}_2^2 \alpha_J^4 R_r +\kappa_{2j_1,0}+\kappa_{0,2j_2}, \\
& b= 2\pi^4 Q(\pi^2-\alpha_J^2)-\mathfrak{p}_2^2\alpha_J^4 R_r,
\end{aligned}
\end{equation}
where $J=(j_1,j_2,1)$ denotes the critical index and $\kappa$ is given by:
\begin{equation}\label{kappa}
\begin{aligned}
&
\kappa_{s_1,s_2}:= \nu_{s_1,s_2}\big[(\mathfrak{p}_1\pi^2 Q(1+\frac{4\gamma_J^2}{\gamma_{s_1,s_2,2}^2})-\mathfrak{p}_2\gamma_J^4)(R_r+\eta_{s_1,s_2})
 -\frac{\mathfrak{p}_1\mathfrak{p}_2 R_r \alpha_J^2}{\gamma_{s_1,s_2,2}^2}(R_{s_1,s_2,2}+\eta_{s_1,s_2})\big],\\
& \eta_{s_1,s_2}:=\frac{\gamma_{s_1,s_2,2}^2}{\alpha_J^2}\left(\frac{\pi^2 Q}{\mathfrak{p}_2}+\frac{\gamma_J^4}{\mathfrak{p}_1}\right), 
\qquad \nu_{s_1,s_2}=\frac{2 \mathfrak{p}_2 \pi^2(\alpha_J^2-\frac{1}{4}\alpha_{s_1,s_2}^2)^2}{\mathfrak{p}_1 (R_{s_1,s_2,2}-R_r)\alpha_J^2},\\
& R_{s_1,s_2,s_3}=\frac{\gamma_{s_1,s_2,s_3}^2}{\alpha_{s_1,s_2}^2}(\gamma_{s_1,s_2,s_3}^4+Q (s_3 \pi)^2),  
\qquad R_r=\min R_{s_1,s_2,s_3}=\frac{\gamma_J^2}{\alpha_J^2}(\gamma_J^4+Q \pi^2).
\end{aligned}
\end{equation}

\subsection{Transitions from simple real eigenvalues}
In the case where the magnetic Prandtl number $\mathfrak{p}_2\geq1$ or the Chandrasekhar number $Q<Q_0$ where $Q_0$ depends on the system  parameters, the first eigenvalues are real. Generically speaking the first eigenvalue is simple and the dynamical transition is characterized by the following theorem.
\begin{theorem}\label{rollthm}
If $\mathfrak{p}_2\geq 1$ or $Q<Q_{0}$ then the problem (\ref
{Eqn1}) with (\ref{B.C.}) undergoes a dynamic transition at $R=R_{r}$. Consider the case where the first critical eigenvalue is simple.
\begin{itemize}
\item[a)]  When the first critical wave index is of the form $J=(0,j,1)$ or $J=(j,0,1)$ with non-zero $j$, the first transition is,
 \begin{equation*}
\begin{aligned}
& \text{Type-I, } &&\text{if }b<0,\\
& \text{Type-II, }&& \text{if } b>0.
\end{aligned}
\end{equation*}
\item[b)] When the first critical wave index is of the form $J=(j,k,1)$ with $j\neq0$, $k\neq0$. Then the first transition is,
 \begin{equation*}
\begin{aligned}
& \text{Type-I, } &&\text{if }a<0,\\
& \text{Type-II, }&& \text{if } a>0.
\end{aligned}
\end{equation*}
\item[c)] Moreover, when the transition is Type-I, then as $R$ crosses $R_{r}$, the system bifurcates to two steady state
solutions $\psi_{\pm}$ which can be expressed as
\begin{equation*}
\psi _{\pm}(R)=\pm c (R-R_r ) ^{1/2}\psi _{J}+o(( R-R_r) ^{1/2}) ,
\end{equation*}%
where $c>0$ is a constant, and $\psi _{J}$ is the critical eigenvector. Furthermore, there is an open set $U\in H$ with $\psi =0\in U$ such that $\overline{U}=%
\overline{U_{+}}\cup \overline{U_{-}}$, $U_{+}\cap U_{-}=\emptyset $, $\psi
=0\in \partial U_{+}\cap \partial U_{-}$ and $\psi _{\pm}(R)$ attracts $U_{\pm}$.

\end{itemize}
\end{theorem}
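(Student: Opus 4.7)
The plan is to reduce the PDE to a scalar equation on the one-dimensional center manifold associated with the simple critical eigenvalue $\beta_J^1(R)$, and then classify the transition by the sign of the cubic coefficient of the reduced equation. By Proposition 1, at $R = R_r$ the critical eigenvalue is simple with eigenfunction $\psi_J$, while every other eigenvalue of $L_{R_r}$ has strictly negative real part. Denote by $\psi_J^*$ the adjoint eigenfunction, normalized by $\langle \psi_J, \psi_J^*\rangle = 1$. The Ma--Wang center manifold theorem \cite{b-book,ptd} yields a decomposition $\psi = x\,\psi_J + \Phi(x, R)$ with $\Phi(0, R_r) = 0$ and $D_x\Phi(0, R_r) = 0$, reducing the flow to
\begin{equation*}
\dot{x} = \beta_J^1(R)\, x + \bigl\langle G(x\psi_J + \Phi(x,R)), \psi_J^*\bigr\rangle.
\end{equation*}

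Since $G$ is quadratic, the expansion in $x$ gives $\langle G(\psi), \psi_J^*\rangle = x^2\langle G(\psi_J, \psi_J), \psi_J^*\rangle + x\,\langle G_s(\psi_J, \Phi_2), \psi_J^*\rangle + O(x^4)$, where $\Phi_2$ is the quadratic part of the center manifold, determined by the homological equation
\begin{equation*}
L_{R_r}\Phi_2 = -\bigl[G(\psi_J,\psi_J) - \langle G(\psi_J,\psi_J), \psi_J^*\rangle\psi_J\bigr]\,x^2.
\end{equation*}
The structure of the trilinear form \eqref{trilinear} combined with the boundary conditions \eqref{B.C.} forces the self-projection $\langle G(\psi_J,\psi_J), \psi_J^*\rangle$ to vanish, so the $x^2$ term is absent from the reduced equation and the leading nonlinear contribution is cubic.

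To evaluate the cubic coefficient, I exploit the alternative basis mentioned in the introduction: rather than expanding in eigenfunctions of $L_{R_r}$, I work with a family of divergence-free trigonometric fields indexed by triples $(s_1, s_2, s_3)$ as in \eqref{wavenumbers} that are compatible with \eqref{B.C.} and under which $L_{R_r}$ acts as a small explicit block on each Fourier mode. Using product-to-sum identities, one finds that $G(\psi_J,\psi_J)$ excites only modes whose horizontal indices are of the form $(2j_1, 0)$ and $(0, 2j_2)$ (together with the mean-shear block), so that $\Phi_2$ is a finite sum with one nonzero Fourier component per such block. Computing $\langle G_s(\psi_J, \Phi_2), \psi_J^*\rangle$ block by block, the contribution of each $(2j_1,0)$ and $(0,2j_2)$ block is exactly the quantity $\kappa_{2j_1,0}$ and $\kappa_{0,2j_2}$ from \eqref{kappa}, while the remaining pieces reorganize into the $\pi^4 Q(\pi^2 - 5\alpha_J^2) - \mathfrak{p}_2^2\alpha_J^4 R_r$ term of $a$. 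In case (b) all three contributions are present, giving cubic coefficient proportional to $a$; in case (a), the vanishing of $j_1$ or $j_2$ eliminates one of the $\kappa$ terms and the remaining expression collapses to $b$ after direct simplification. The Ma--Wang attractor bifurcation theorem then classifies the transition as Type-I when this cubic coefficient is negative and Type-II when positive, yielding the dichotomies in (a) and (b).

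Part (c) follows from the resulting reduced equation $\dot x = \beta_J^1(R)x + \alpha x^3 + o(x^3)$ with $\alpha<0$: the nontrivial equilibria satisfy $x^2 = -\beta_J^1(R)/\alpha$, and since $\beta_J^1$ vanishes linearly at $R_r$, this gives $x_{\pm}(R) = \pm c(R - R_r)^{1/2} + o((R-R_r)^{1/2})$; the sets $U_\pm$ are the images under $x \mapsto x\psi_J + \Phi(x,R)$ of the two half-lines $\{x>0\}$ and $\{x<0\}$. The main obstacle is the third step: because $L_{R_r}$ couples velocity, magnetic field, and temperature, inverting it on each Fourier block requires solving a coupled $3\times 3$ linear system whose entries mix $\mathfrak{p}_1$, $\mathfrak{p}_2$, $Q$, and $R_r$, and tracking how quadratic self-interactions feed back into the velocity equation through the Lorentz term $(H\cdot\nabla)H$ and the buoyancy term $RT\vec k$ before projecting onto $\psi_J^*$. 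The custom basis is precisely what keeps this bookkeeping explicit and allows it to consolidate into the compact forms \eqref{a,b} and \eqref{kappa}.
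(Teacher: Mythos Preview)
Your overall strategy matches the paper's: reduce to the one-dimensional center manifold, note that the quadratic self-projection vanishes, expand the center manifold in the Laplacian eigenbasis rather than the $L_R$-eigenbasis, and classify the transition by the sign of the resulting cubic coefficient. The gap is in the bookkeeping of which modes are excited and, more seriously, in your claim that case~(a) is obtained from case~(b) by dropping one $\kappa$-term.

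In the rectangle case $J=(j,k,1)$ with $jk\neq 0$, the self-interaction $G(\psi_J,\psi_J)$ excites the six modes
\[
\mathcal{S}_{rec}=\{(0,0,2),\,(2j,2k,0),\,(2j,0,0),\,(0,2k,0),\,(0,2k,2),\,(2j,0,2)\},
\]
not only those with one vanishing horizontal index; you omit the $(2j,2k,0)$ block. The $\kappa$-contributions in $a$ arise specifically from the blocks $(2j,0,2)$ and $(0,2k,2)$ --- vertical index $2$ together with a nonzero horizontal index --- where one must invert the full $3\times 3$ coupled $(u,T,H)$ system. The four remaining modes produce the $\pi^4 Q(\pi^2-5\alpha_J^2)-\mathfrak{p}_2^2\alpha_J^4 R_r$ part.

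In the roll case $J=(j,0,1)$ the computation is genuinely different, not a specialization. One checks directly that $G(\psi_J,\psi_J,e_{2j,0,2})=0$, so only \emph{two} modes survive,
\[
\mathcal{S}_{roll}=\{(0,0,2),\,(2j,0,0)\},
\]
and no $\kappa$-term appears at all. These two contributions give a cubic coefficient proportional to $b=2\pi^4 Q(\pi^2-\alpha_J^2)-\mathfrak{p}_2^2\alpha_J^4 R_r$. If instead you formally set $j_2=0$ in $a$, then $\nu_{2j,0}$ vanishes (since $\alpha_{2j,0}^2=4\alpha_J^2$), so $\kappa_{2j,0}=0$, while $\kappa_{0,0}$ is undefined; the surviving piece is $\pi^4 Q(\pi^2-5\alpha_J^2)-\mathfrak{p}_2^2\alpha_J^4 R_r$, which differs from $b$ in the $Q$-dependent term by $\pi^4 Q(\pi^2+3\alpha_J^2)\neq 0$. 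Hence the roll cubic coefficient cannot be read off from the rectangle formula by ``eliminating one $\kappa$''; the two cases require separate mode-by-mode computations, as the paper carries out.
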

The following corollary of Theorem \ref{rollthm} gives sufficient conditions for the existence of Type-I transition when the critical mode is a roll pattern.
\begin{corollary}\label{rollcorollary}
Under the assumptions of Theorem \ref{rollthm}, when the first critical index is of the form $J=(0,j,1)$ or $J=(j,0,1)$ with non-zero $j$, the transition at $R=R_r$ is always Type-I if $Q>Q_{\ast}$ or $\mathfrak{p}_2>\mathfrak{p}_{\ast}$ where $Q_{\ast}$ and $\mathfrak{p}_{\ast}$ depends on the length scales of the domain. 
\begin{itemize}
\item[a)] Regardless of the length scales $L_1$, $L_2$ of the domain, $Q_{\ast}<307$ and $\mathfrak{p}_{\ast}<2.24$. 
\item[b)] As $\max\{L_1,L_2\}\rightarrow\infty$, we have $Q_{\ast}\rightarrow4\pi^2$ and $\mathfrak{p}_{\ast}\rightarrow2/\sqrt{3}$.
\end{itemize}
\end{corollary}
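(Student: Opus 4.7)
The plan is to read off the Type-I condition from Theorem~\ref{rollthm}(a) as
\[
b = 2\pi^4 Q(\pi^2 - \alpha_J^2) - \mathfrak{p}_2^2 \alpha_J^4 R_r < 0,
\]
and exhibit two essentially independent sufficient conditions on $(Q,\mathfrak{p}_2)$. Substituting $R_r = \gamma_J^2(\gamma_J^4 + Q\pi^2)/\alpha_J^2$ and writing $x := \alpha_J^2/\pi^2$, the inequality $b<0$ splits into the dichotomy: either $\alpha_J^2 \geq \pi^2$ (both terms of $b$ are non-positive and $b<0$ is automatic), or $x<1$, in which case $b<0$ is equivalent to
\[
\mathfrak{p}_2^2 > g(x,Q) := \frac{2Q(1-x)}{x(x+1)\bigl[\pi^2(x+1)^2 + Q\bigr]}.
\]
Since $\partial_Q g>0$ with $\sup_{Q>0} g(x,Q) = G(x) := 2(1-x)/[x(x+1)]$ and $G$ is strictly decreasing on $(0,1)$, any lower bound $x^*_{\min}(L_1,L_2)$ on the critical value of $x$ (as $Q$ varies) gives the estimate $\mathfrak{p}_*(L_1,L_2)^2 \leq G(x^*_{\min}(L_1,L_2))$.

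For the $Q_*$ half I would define $Q_*(L_1,L_2)$ to be the smallest $Q$ above which the discrete minimiser of $R_{j,0,1}$ (or $R_{0,k,1}$) already has $\alpha_J^2 \geq \pi^2$, so that the first alternative of the dichotomy applies. In the continuum variational problem, minimising $(\gamma^2/\alpha^2)(\gamma^4 + Q\pi^2)$ produces the optimality condition $(2\alpha^2 - \pi^2)(\alpha^2 + \pi^2)^2 = Q\pi^4$, whose positive root crosses $\pi^2$ exactly at $Q = 4\pi^2$; this yields $Q_* \to 4\pi^2$ as $\max\{L_1,L_2\}\to\infty$. For finite $L$ the critical index is a step function of $Q$, and the largest $Q$ compatible with $\alpha_J^2 < \pi^2$ is realised at a crossover $R_{j,0,1} = R_{j+1,0,1}$ with $j^2/L^2 < 1 \leq (j+1)^2/L^2$. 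Specialising to the endpoint $L = j^+$, so that $x_j = 1^-$ and $x_{j+1} = ((j+1)/j)^2$, reduces the crossover equation to a linear equation in $q := Q/\pi^2$ with closed-form solution
\[
q(j) = \frac{(2j^2 + 2j + 1)^3 - 8 j^4 (j+1)^2}{j^4(2j+1)}.
\]
Direct evaluation gives $q(1) = 31, q(2) \approx 13.1, q(3) \approx 9.3, \ldots$, which is strictly decreasing with $q(j) \to 4$ as $j\to\infty$, and a separate monotonicity check within each interval $L \in (j,j+1)$ shows the crossover-$q$ is maximal at $L = j^+$. Hence $Q_*(L_1,L_2) \leq 31\pi^2 < 307$ and $Q_* \to 4\pi^2$ in the asymptotic regime.

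For the $\mathfrak{p}_*$ half the task reduces to identifying $x^*_{\min}$. Since the discrete critical $j$ is non-decreasing in $Q$, one has $x^*_{\min}(L_1,L_2) = j_0^2/L^2$, where $L$ is the side along the roll axis and $j_0$ is the smallest $j$ ever critical. The transition $j_0 = k \leftrightarrow j_0 = k+1$ occurs at $L = L_k$ solving the $Q = 0$ tie
\[
(L^2+k^2)^3(k+1)^2 = (L^2+(k+1)^2)^3 k^2,
\]
which admits the closed form $L_k^2 = [\rho_k(k+1)^2 - k^2]/(1-\rho_k)$ with $\rho_k := (k/(k+1))^{2/3}$; on each interval $L \in (L_{k-1},L_k]$ the infimum of the critical $x$ is $k^2/L_k^2$. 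A monotonicity check shows the sequence $\{k^2/L_k^2\}$ is strictly increasing in $k$, so the global infimum of $x^*_{\min}$ is realised at $k = 1$, giving $L_1^2 \approx 4.11$, $x^*_{\min} \approx 0.2434$, and $G(x^*_{\min}) \approx 5.00$, hence $\mathfrak{p}_*^2 < 2.24^2$. The expansion $\rho_k = 1 - 2/(3k) + O(k^{-2})$ as $k\to\infty$ yields $k^2/L_k^2 \to 1/2$ and therefore $G \to 4/3$, so $\mathfrak{p}_* \to 2/\sqrt{3}$ as $\max\{L_1,L_2\}\to\infty$.

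The main obstacle is the discrete case analysis in both halves: one must argue that only the nearest-neighbour crossovers $R_{j,0,1} = R_{j+1,0,1}$ realise the extrema, so that non-consecutive crossovers with $j' > j+1$ can be discarded. This reduces to strict monotonicity of the discrete minimiser $j$ in $Q$ (a consequence of the convexity in $\alpha^2$ of the underlying variational functional) together with the monotonicity in $k$ of the explicit sequences derived above. The computations themselves are elementary once these reductions are in hand, but carrying out the bookkeeping across the regimes $j_0 = 1,2,3,\ldots$ is where the proof requires the most care.
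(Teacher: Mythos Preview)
Your approach is essentially the same as the paper's. The paper isolates the lower bound $\alpha_J \geq \pi/\bigl(2^{1/3}(2^{2/3}+1)^{1/2}\bigr) \approx 1.55$ (equivalently your $x^*_{\min}\approx 0.2434$) as a separate lemma, proved via exactly the nearest-neighbour crossover analysis at $Q=0$ that you describe; it then reads off $\mathfrak{p}_*<2.24$ from the second factor of $b$ (your $G(x)$), and for $Q_*$ simply asserts ``using a similar argument'' that $\alpha_J>\pi$ once $Q>306$, without writing out the explicit crossover formula $q(j)$ you derive. The monotonicity checks you flag as the main obstacle are precisely the details the paper leaves implicit.
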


\subsection{Transitions from real eigenvalues with multiplicity two}
In this section we consider the transitions when two modes characterizing a hexagon becomes unstable at the same critical parameter. For this to happen, a certain relation has to hold for the length scales of the box:
\begin{equation} \label{hexbox}
\frac{L_1}{L_2}=\frac{j}{k \sqrt{3}}. \end{equation}
where $j$ and $k$ are two positive integers; see also \cite{dauby1993} for a discussion. With this relation in hand, the modes with indices $I=(j,k,1)$ and $J=(0,2k,1)$ have the same wave number, $\alpha_I=\alpha_J$. Moreover, the vector field $x_I \psi_I+x_J \psi_J$ has a hexagonal pattern parallel to $y$-axis when $x_I=\pm 2 x_J$; see Figure \ref{hexa}. Thus a pair of modes will be critical at the critical Rayleigh number $R_r$ whenever one of the modes minimizes the relation \eqref{Rr}. In Figure \ref{preferredhexQ10}, the red lines show when this is the case at $Q=10$. We remark here that when the length scales are in close vicinity of the relation \eqref{hexbox}, although the first critical eigenvalue may be simple, the two eigenvalues will be very close to each other at the critical parameter $R_r$ and the structure of transition will still persist.

\begin{theorem} \label{hexthm}
Consider the case where  $I=(j,k,1)$ and $J=(0,2k,1)$ are the first critical indices which satisfy the relation \eqref{hexbox}. Consider the eight regions in the $a$--$b$ plane as shown in given by Figure \ref{regions2} with $a$ and $b$ as given in \eqref{a,b}. If $\mathfrak{p}_2\geq 1$ or $Q<Q_0$ then the first dynamic transition occurs at $R=R_r$. Moreover the following statements hold true.
\begin{itemize}
\item[i)] When $(a,b)$ is in one of the regions $\mathcal{I}_1$, $\mathcal{I}_2$,  the transition is Type-I. The topological structure of the transition is given in Figure \ref{VI} and Figure \ref{VII} respectively. In both regions, there is an attractor bifurcating on $R>R_r$ which is homeomorphic to $S^1$ containing eight bifurcated points and the heteroclinic orbits containing them. Four of these points are attractors and four of them are saddles. In region $\mathcal{I}_1$, hexagons $2\phi_I+\phi_J$ are the minimal attractors, and in region $\mathcal{I}_2$, rolls $\phi_J$ and rectangles $\phi_I$ are the minimal attractors.
\item[ii)]When $(a,b)$ is in one of the regions $\mathcal{II}_1,\,\mathcal{II}_2,\,\mathcal{II}_3,\,\mathcal{II}_4$ and $\mathcal{II}_5$, the transition is Type-II. The topological structure of the transition is given in Figure \ref{I}--Figure \ref{V} respectively. In regions $\mathcal{II}_2$ and $\mathcal{II}_3$, there is a repeller bifurcating on $R<R_r$ which is homeomorphic to $S^1$ and no bifurcated states on $R>R_r$. In the regions $\mathcal{II}_1,\,\mathcal{II}_4,\,\mathcal{II}_5$ there are bifurcated points on both $R>R_r$ and $R<R_r$. 
\item[iii)] When $(a,b)$ is in region $\mathcal{III}_1$ the transition is Type-III; see Figure \ref{VIII}. There is a neighborhood $\mathcal{U}$ of $\psi=0$ in $H$ such that for any $R_r<R<R_r+\epsilon$ with some $\epsilon>0$, $\mathcal{U}$ can be decomposed into two open sets $\mathcal{U}_1^{R}$, $\mathcal{U}_2^{R}$,
$$\overline{\mathcal{U}}=\overline{\mathcal{U}_1^{R}}\cup\overline{\mathcal{U}_2^{R}}, \qquad \mathcal{U}_1^{R}\cap\mathcal{U}_2^{R}=\emptyset$$ such that
\begin{equation*}
\begin{aligned}
& \lim_{R\rightarrow R_r}\limsup_{t\rightarrow\infty} ||S_{R}(t,\psi)||_H=0 && \qquad \forall \psi\in\mathcal{U}_1^{R}, \\
& \limsup_{t\rightarrow\infty} ||S_{R}(t,\psi)||_H\geq\delta>0&& \qquad \forall \psi\in\mathcal{U}_2^{R},
\end{aligned}
\end{equation*}
for some $\delta>0$. Here $S_{R}$ is the evolution of the solution with initial data $\psi$.
Moreover $\mathcal{P}(U_1^R)$ is a small perturbation of the union of two sectorial regions which are given by:
\begin{equation}
\begin{aligned}
& \mathcal{P}(\mathcal{U}_{1,+}^R)\approx\mathcal{U}\cap \{x\in \mathbb{R}^2\mid -\tan^{-1}{1/2}<arg(x)<\tan^{-1}{1/2}\},\\
& \mathcal{P}(\mathcal{U}_{1,-}^R)\approx\mathcal{U}\cap\{x\in \mathbb{R}^2\mid \pi-\tan^{-1}{1/2}<arg(x)<\pi+\tan^{-1}{1/2}\},
\end{aligned}
\end{equation}
where $\mathcal{P}$ is the projection onto $\phi_{I}$, $\phi_{J}$ plane.

Moreover, the system bifurcates to two attractors with basin of attraction $\mathcal{U}_{1,+}^R$, $\mathcal{U}_{1,-}^R$. Each attractor consists of three steady states and the connecting heteroclinic orbits. The steady states $\phi_I$ having rectangular patterns are the minimal attractors of the bifurcated attractor. 
\end{itemize}
\end{theorem}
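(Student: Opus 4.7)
The plan is to reduce the PDE to a two-dimensional ODE on the local center manifold spanned by the two critical eigenmodes $\psi_I, \psi_J$, and then classify the bifurcation picture in each of the eight regions of the $(a,b)$ plane. Since $\beta_I^1(R_r)=\beta_J^1(R_r)=0$ while all other eigenvalues have strictly negative real part at $R=R_r$ (Proposition \ref{PrExSt}), the center manifold theorem provides a locally invariant 2D manifold tangent at the origin to $\mathrm{span}\{\psi_I,\psi_J\}$. Writing $\psi=x_I\psi_I+x_J\psi_J+\Phi(x_I,x_J,R)$ with $\Phi=O(|x|^2)$ and projecting against the adjoint eigenvectors $\psi_I^\ast,\psi_J^\ast$ of $L_R^\ast$ gives reduced equations of the form $\dot{x}_I=\beta(R)x_I+N_I(x_I,x_J)$, $\dot{x}_J=\beta(R)x_J+N_J(x_I,x_J)$ where the nonlinear parts $N_I,N_J$ are computed from the trilinear form $G$ in \eqref{trilinear}.

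The form of $N_I,N_J$ is dictated by the wavevector relation $\vec{k}_I+\vec{k}_{I'}+\vec{k}_J=0$, where $I'$ denotes the reflection of $I$ through $j\mapsto -j$, which under the free-slip boundary conditions \eqref{B.C.} corresponds to the same function as $\psi_I$. This resonance is precisely what encodes the hexagonal pattern and it forces a non-vanishing quadratic coupling; all other quadratic projections onto the critical subspace are killed by orthogonality in the horizontal Fourier basis. The resulting reduced system has the generic hexagonal normal form
\begin{equation*}
\begin{aligned}
&\dot{x}_I=\beta(R)x_I+\alpha\,x_I x_J+c_1 x_I^3+c_2 x_I x_J^2+O(4),\\
&\dot{x}_J=\beta(R)x_J+\tfrac{\alpha}{2}\,x_I^2+c_3 x_I^2 x_J+c_4 x_J^3+O(4),
\end{aligned}
\end{equation*}
where $\alpha\neq 0$ comes from the direct projection $G_s(\psi_I,\psi_I,\psi_J^\ast)$ and the cubic coefficients $c_1,\dots,c_4$ combine a direct cubic projection with the contribution from the center-manifold correction $\Phi$. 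To compute $\Phi$ one must solve $-L_{R_r}\Phi=\text{(quadratic self-interaction of the critical modes)}$ on the non-resonant Fourier components excited at vertical wavenumber $2$ and horizontal indices $(2j_1,0)$, $(0,2j_2)$, and $(2j_1,2j_2)$. Carrying out these three $3\times 3$ solves and collecting terms identifies the self-interaction coefficient along the roll direction with (a positive multiple of) the quantity $b$ in \eqref{a,b}, and the rectangular self-interaction coefficient with (a positive multiple of) $a$; the two correction terms $\kappa_{2j_1,0}$ and $\kappa_{0,2j_2}$ are exactly the contributions of the first two excited non-resonant modes.

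With this normal form I would then analyze each region of the $(a,b)$ plane separately. In each region I would (i) enumerate the equilibria by solving the cubic algebraic system, identifying the rolls $\pm\phi_J$, the rectangles $\pm\phi_I$, and the four hexagons $\pm(2\phi_I\pm\phi_J)$; (ii) read off the stability of each equilibrium by linearizing the reduced flow; and (iii) sketch the global phase portrait. The Type-I regions $\mathcal I_1,\mathcal I_2$ are characterized by both $a<0$ and $b<0$ with the two cubic energies on the invariant circle of radius $\sqrt{-\beta(R)/\cdot}$ arranged so that all eight equilibria live on the bifurcating $S^1$ and the flow on this circle alternates; whether the hexagons or the rolls/rectangles are the sinks is decided by comparing the magnitudes of $a$ and $b$. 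The Type-II regions correspond to at least one of $a,b$ being positive, yielding subcritical branches and transversal jump behavior. Region $\mathcal{III}_1$ is the mixed case where the rectangle direction is supercritical (attracting) but the hexagonal quadratic term dominates along two sectors, opening the sectorial escape set whose opening half-angle is $\tan^{-1}(1/2)$, reflecting precisely the $x_I=\pm 2x_J$ ratio that defines a hexagon. The claimed Type classification, bifurcation structure, and local basin decomposition then follow by applying the abstract dynamic transition theorems of \cite{b-book,ptd} to lift the 2D reduced analysis back to the full PDE in $H$.

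The principal obstacle is the explicit determination of the cubic coefficients, and hence the identification with the closed-form expressions in \eqref{a,b}. The operator $L_R$ is non-selfadjoint because of the magnetic coupling, so both the right eigenvectors $\psi_J$ and the adjoint eigenvectors $\psi_J^\ast$ must be computed, and each coordinate ($u$, $H$, $T$) contributes to $G$ through the Lorentz, induction, and convection terms in different ways. Rather than expanding $\Phi$ directly in the complicated eigenbasis of $L_R$, I would follow the author's remark in the introduction and use a simpler Fourier tensor-product basis on which $L_R$ acts block-wise, making the inverse $L_{R_r}^{-1}$ on each non-resonant triple explicit; converting the resulting rational expressions in $\mathfrak{p}_1,\mathfrak{p}_2,Q,R_r,\gamma^2,\alpha^2$ into the compact form of $a$ and $b$ is the technical heart of the proof.
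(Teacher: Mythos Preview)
Your overall strategy---center-manifold reduction to a 2D ODE and then a case-by-case phase-plane analysis---matches the paper, and your remark about using a simpler Fourier basis rather than the eigenbasis of $L_R$ is exactly what the paper does. However, there is a genuine error in the normal form you write down.

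You assert that the wavevector resonance forces a nonzero quadratic coupling $\alpha\,x_Ix_J$, $\tfrac{\alpha}{2}x_I^2$, and you build the rest of the analysis (transcritical hexagon branches, etc.) on this. In fact $\alpha=0$ here. The critical eigenmodes $\psi_I,\psi_J$ have vertical profile $\sin\pi x_3$ (since $j_3=1$), so $G(\psi_I,\psi_I)$ and $G(\psi_I,\psi_J)$ live in the span of vertical modes with $j_3=0$ and $j_3=2$; their projection onto $\psi_I^\ast,\psi_J^\ast$ (which also have $j_3=1$) vanishes. This is the familiar mid-plane reflection symmetry of Boussinesq convection with these boundary conditions, and it survives the magnetic coupling. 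The paper's reduced system is therefore purely cubic,
\[
\dot x=\beta x+x\bigl(a\,x^2+2(2a-b)\,y^2\bigr),\qquad
\dot y=\beta y+y\bigl((2a-b)\,x^2+2b\,y^2\bigr),
\]
with $x=x_I$, $y=x_J$. The cross-coefficients are not free parameters $c_2,c_3$ but are tied to $a,b$ by symmetry relations among the $G_s$-terms that the paper exploits explicitly. The invariant lines $x=\pm 2y$ (hence the $\tan^{-1}(1/2)$ sectors you mention) arise from this cubic system, not from a quadratic resonance.

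Consequently the bifurcation picture is governed by the signs of $a$, $b$, $4a-b$, and $a-b$ (eight regions), not by the interplay of a quadratic coefficient with cubic saturation. Your proposed analysis of ``subcritical branches from the quadratic term'' and ``hexagonal quadratic term dominates along two sectors'' does not apply; you would need to redo the equilibrium and stability computations for the cubic system above.
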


\begin{corollary}\label{hexcor}
Under the assumptions of Theorem \ref{hexthm}, when $\mathfrak{p}_2\geq8$, the transition at $R=R_r$ is always Type-I. Moreover, there is an attractor bifurcating on $R>R_r$, which is homeomorphic to $S^1$. This attractor contains eight steady states and the orbits connecting them; see Figure \ref{VI}. Two of the steady states have roll patterns, two of them have rectangular patterns and four of them have hexagonal patterns. The rolls and rectangles are the minimal attractors of this attractor while the hexagons are unstable states after the transition.
\end{corollary}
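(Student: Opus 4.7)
The plan is to verify that when $\mathfrak{p}_2 \geq 8$, the pair $(a,b)$ defined in \eqref{a,b} falls inside region $\mathcal{I}_2$ of Figure \ref{regions2}. Once this is known, Theorem \ref{hexthm}(i) furnishes the entire conclusion of the corollary: a Type-I transition, an $S^1$-attractor, the eight bifurcated steady states, and the identification of roll and rectangular patterns as minimal attractors while hexagons are saddles.

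The first substantive step is to control $b = 2\pi^4 Q(\pi^2 - \alpha_J^2) - \mathfrak{p}_2^2 \alpha_J^4 R_r$. Inserting $R_r = (\gamma_J^2/\alpha_J^2)(\gamma_J^4 + Q\pi^2)$ rewrites the negative term as $\mathfrak{p}_2^2 \alpha_J^2 \gamma_J^2(\gamma_J^4 + Q\pi^2)$, whose magnetic part $\mathfrak{p}_2^2 \alpha_J^2 \gamma_J^2 Q\pi^2$ already dominates when $\mathfrak{p}_2$ is large. If $\alpha_J^2 \geq \pi^2$ the first term of $b$ is non-positive and $b<0$ is automatic; otherwise the inequality $\mathfrak{p}_2^2 \alpha_J^2 \geq 2(\pi^2 - \alpha_J^2)$ suffices, and $\mathfrak{p}_2 \geq 8$ together with a lower bound on $\alpha_J^2$ coming from the minimization property of $R_r$ provides ample room. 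The same dominant term $-\mathfrak{p}_2^2 \alpha_J^4 R_r$ reappears in $a$, and since $\pi^4 Q(\pi^2 - 5\alpha_J^2) < 2\pi^4 Q(\pi^2 - \alpha_J^2)$ for every $\alpha_J^2 > 0$, $a<0$ will follow provided $\kappa_{2j_1,0} + \kappa_{0,2j_2}$ is bounded in absolute value by a quantity strictly smaller than $\mathfrak{p}_2^2 \alpha_J^4 R_r$ at $\mathfrak{p}_2 = 8$. Estimating $\kappa_{s_1,s_2}$ from its closed form in \eqref{kappa}, using $\gamma_{s_1,s_2,2}^2 \geq \gamma_J^2$ and the gap $R_{s_1,s_2,2} > R_r$, should produce a bound of order at most $\mathfrak{p}_2\,\alpha_J^4 R_r$ (one fewer power of $\mathfrak{p}_2$ than the dominant term), from which $a<0$ follows for $\mathfrak{p}_2 \geq 8$.

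To place $(a,b)$ in region $\mathcal{I}_2$ rather than $\mathcal{I}_1$, I would verify the stability inequality in the reduced normal form that distinguishes the scenario where rolls and rectangles are stable and hexagons are saddles from its opposite. Since $-\mathfrak{p}_2^2\alpha_J^4 R_r$ is the dominant common term in both $a$ and $b$, this discrimination should reduce to comparing $\pi^4 Q(\pi^2 - 5\alpha_J^2)$ against $2\pi^4 Q(\pi^2 - \alpha_J^2)$ plus subdominant $\kappa$-contributions, an essentially algebraic check. The main obstacle will be the uniform bound on $\kappa_{2j_1,0} + \kappa_{0,2j_2}$: the factor $\nu_{s_1,s_2}$ in \eqref{kappa} contains $(R_{s_1,s_2,2} - R_r)^{-1}$, which a priori could blow up if a non-critical mode is close to critical. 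The numerator carries the compensating factor $(\alpha_J^2 - \tfrac{1}{4}\alpha_{s_1,s_2}^2)^2$, and one must exploit this cancellation to obtain a bound independent of the box geometry. Once this $\kappa$-estimate is secured, the rest of the argument is direct algebraic manipulation followed by invocation of Theorem \ref{hexthm}(i).
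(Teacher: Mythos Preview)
Your overall plan --- show $(a,b)$ lies in region $\mathcal{I}_2$ and invoke Theorem~\ref{hexthm}(i) --- is exactly right, and your treatment of $b<0$ is fine (indeed the paper simply cites Corollary~\ref{rollcorollary}, since $\mathfrak{p}_2\geq 8>2.24$). But your handling of the $\kappa$-terms is the wrong maneuver and creates the very obstacle you anticipate.

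You propose to bound $|\kappa_{2j,0}+\kappa_{0,2k}|$ by something of order $\mathfrak{p}_2\,\alpha_J^4 R_r$. This is harder than necessary and, as you note, runs into the factor $(R_{s_1,s_2,2}-R_r)^{-1}$ in $\nu_{s_1,s_2}$ with no obvious uniform control. The paper instead proves the \emph{sign} statement $\kappa_{2j,0}<0$ and $\kappa_{0,2k}<0$ directly. Since $\nu_{s_1,s_2}>0$ (because $R_{s_1,s_2,2}>R_r$), one only needs the bracket in \eqref{kappa} to be negative. Using $\gamma_{2j,0,2}^2<4\gamma_J^2$ and $\mathfrak{p}_2\geq 8$, one gets
\[
\mathfrak{p}_1\pi^2 Q\Bigl(1+\tfrac{4\gamma_J^2}{\gamma_{2j,0,2}^2}\Bigr)-\mathfrak{p}_2\gamma_J^4
\;\leq\;\frac{\mathfrak{p}_1\mathfrak{p}_2\,\gamma_J^2}{\gamma_{2j,0,2}^2}(\gamma_J^4+\pi^2 Q)
\;=\;\frac{\mathfrak{p}_1\mathfrak{p}_2 R_r\alpha_J^2}{\gamma_{2j,0,2}^2},
\]
and since $R_r<R_{2j,0,2}$ the bracket is negative. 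No quantitative bound on $\nu$ is ever needed, and your worry about the denominator evaporates.

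The second point you miss is Lemma~\ref{minalpha}: it gives $\alpha_J\geq 1.55$, hence $5\alpha_J^2>\pi^2$, so $\pi^4 Q(\pi^2-5\alpha_J^2)<0$. With this, all three summands of $a$ in \eqref{a,b} are negative and $a<0$ is immediate. Finally, for $\mathcal{I}_2$ versus $\mathcal{I}_1$ one must check $a<b$ (equivalently $a/b>1$ since $b<0$); but
\[
a-b=\pi^4 Q(\pi^2-5\alpha_J^2)-2\pi^4 Q(\pi^2-\alpha_J^2)+\kappa_{2j,0}+\kappa_{0,2k}
=-\pi^4 Q(\pi^2+3\alpha_J^2)+\kappa_{2j,0}+\kappa_{0,2k}<0,
\]
again a sum of negative terms. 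So the discriminating inequality is not an ``essentially algebraic check'' on subdominant pieces as you suggest, but follows for free once the signs of $\kappa$ are known.
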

\begin{figure}
\includegraphics[scale=0.6]{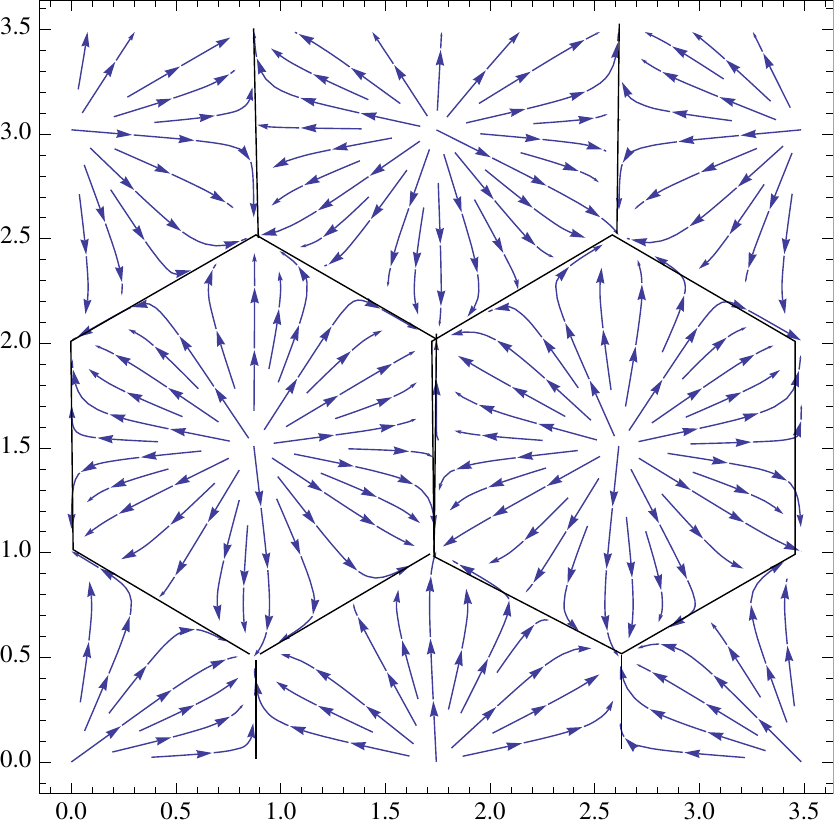}
\caption{The flow structure of $2\phi_{2,1}+\phi_{0,2}$ at $z=1$ for the box dimensions $L_1=L_2=3.5$}
\label{hexa}
\end{figure}

\begin{figure}
\subfigure[]{
\includegraphics[scale=0.45]{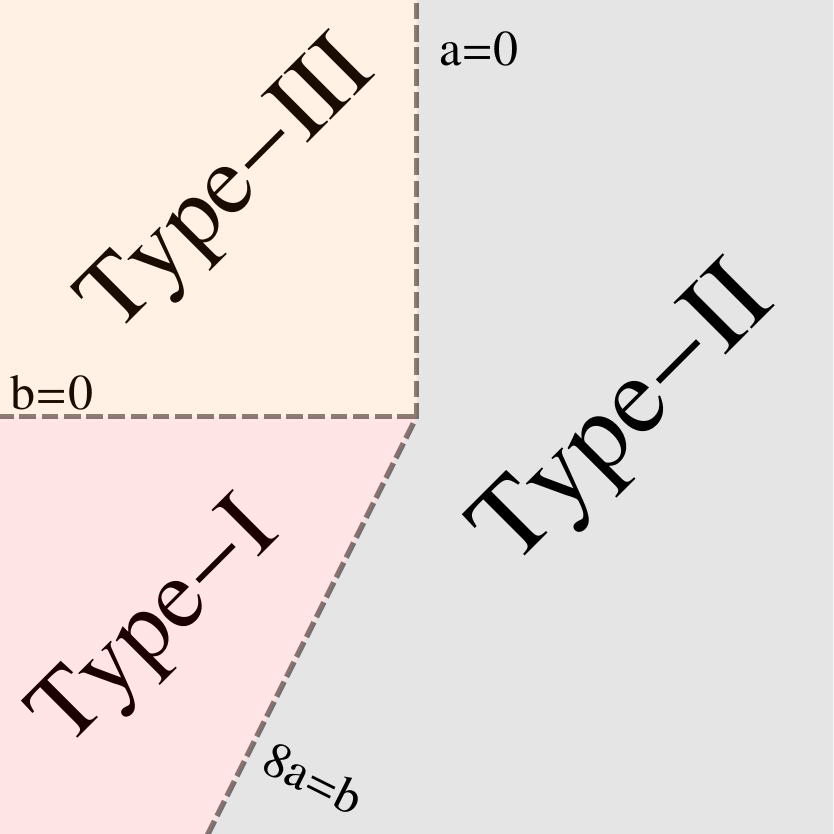}
}
\subfigure[]{
\includegraphics[scale=0.6]{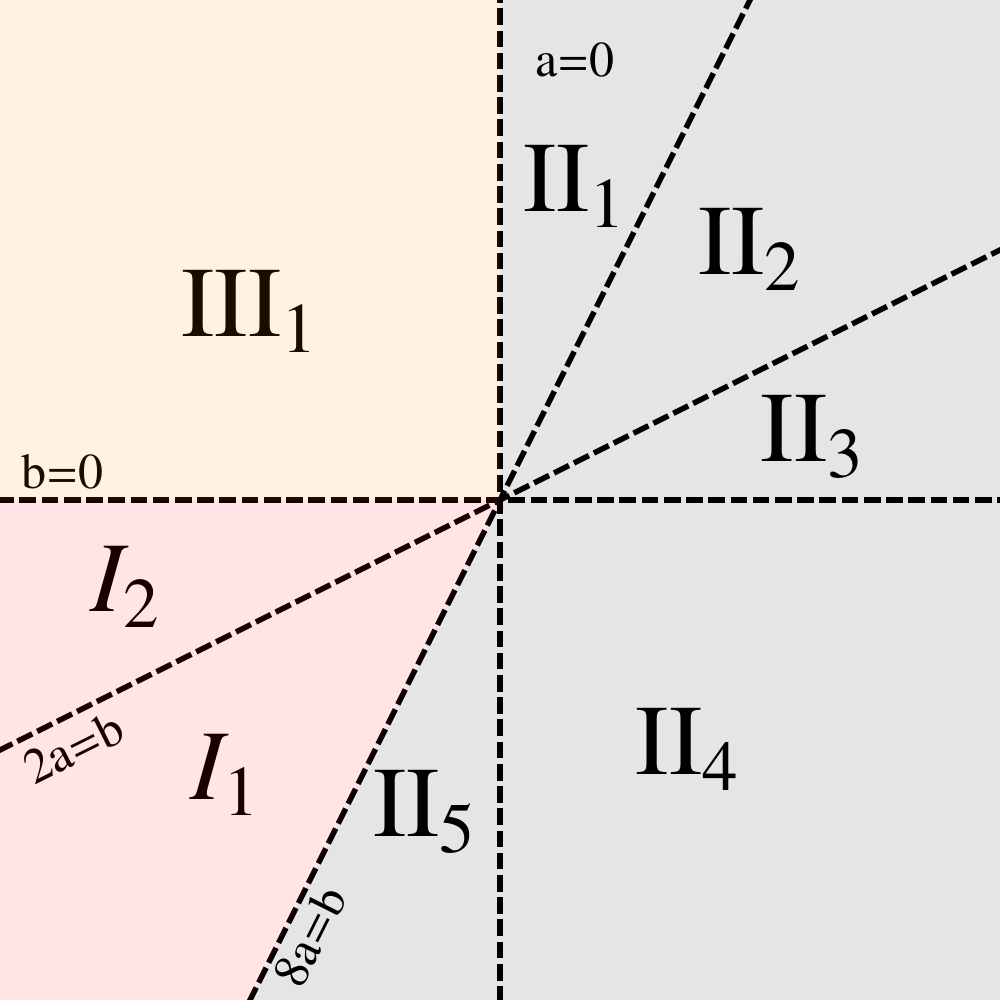}
}
\caption{a) Regions of different type of transitions in the coefficient a-b plane and b) The topological structure of the solutions in different regions are shown in Figure \ref{I} through Figure \ref{VIII}. }
\label{regions2}
\end{figure}

\begin{figure}
\centering
\subfigure[$R\leq R_r$]{
\includegraphics[scale=0.4]{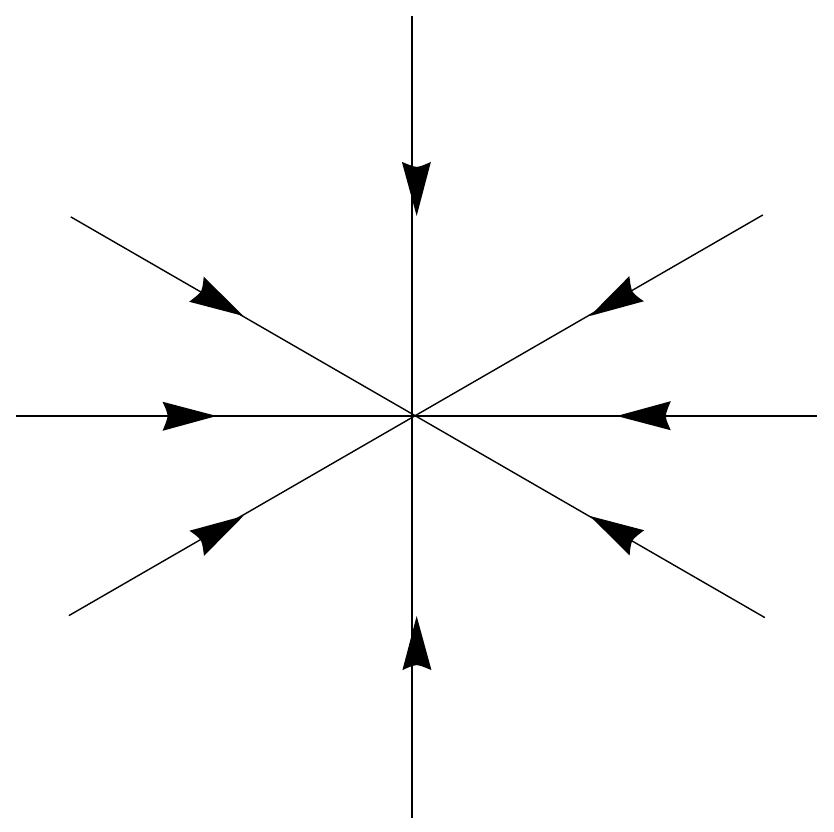}
}
\subfigure[$R>R_r$]{
\includegraphics[scale=0.4]{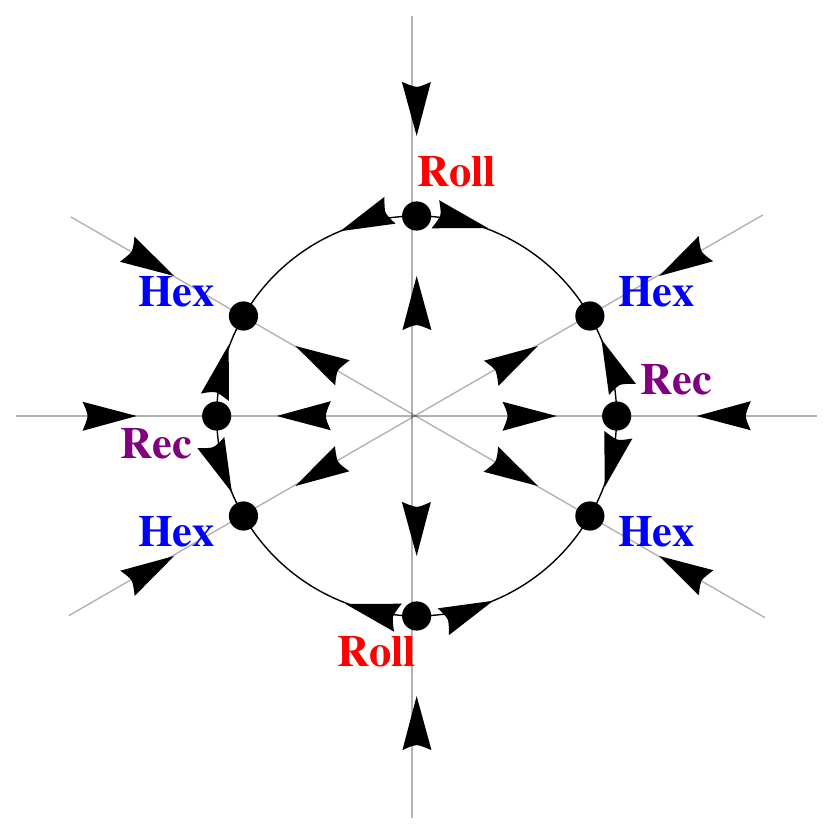}
}
\caption{Region $\mathcal{I}_1$.}
\label{VI}
\end{figure}

\begin{figure}
\centering
\subfigure[$R\leq R_r$]{
\includegraphics[scale=0.4]{V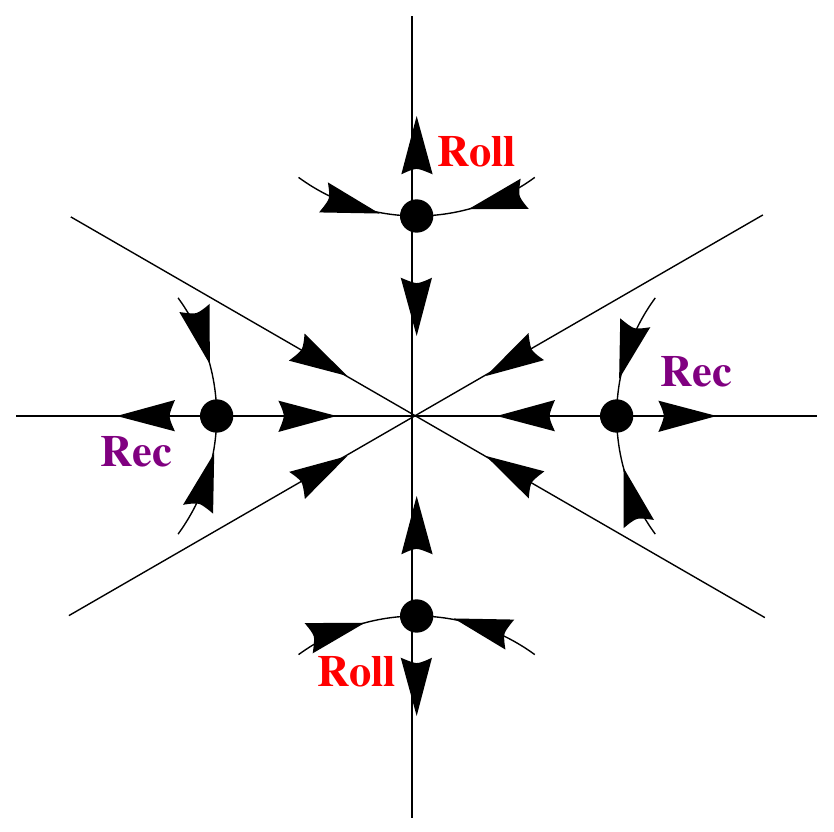}
}
\subfigure[$R>R_r$]{
\includegraphics[scale=0.4]{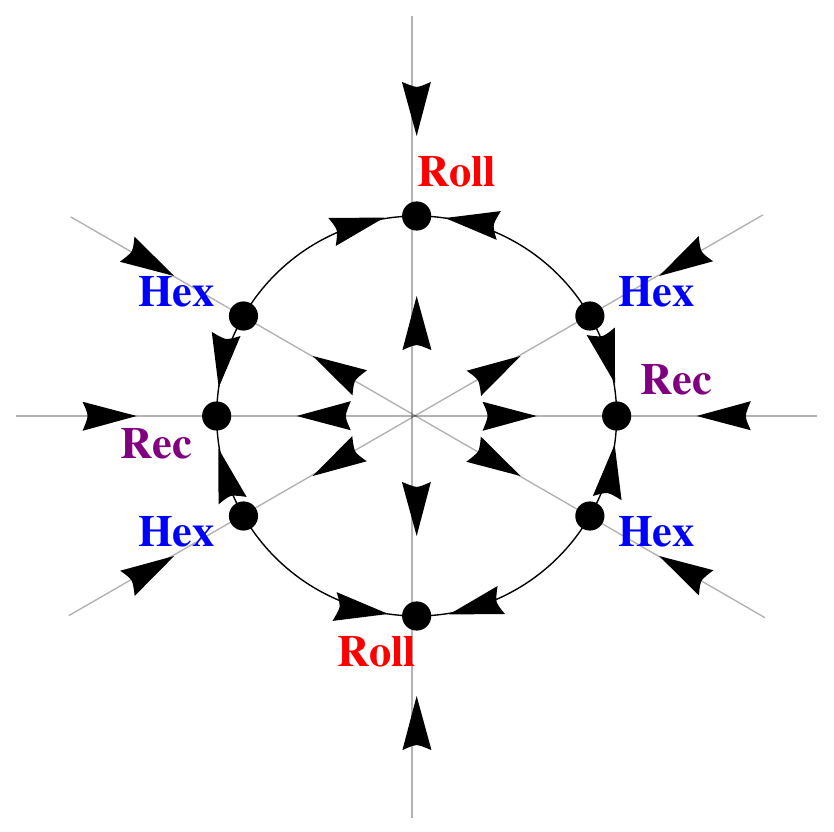}
}
\caption{Region $\mathcal{I}_2$.}
\label{VII}
\end{figure}

\begin{figure}
\centering
\subfigure[$R<R_r$]{
\includegraphics[scale=0.4]{I1.pdf}
}
\subfigure[$R=R_r$]{
\includegraphics[scale=0.4]{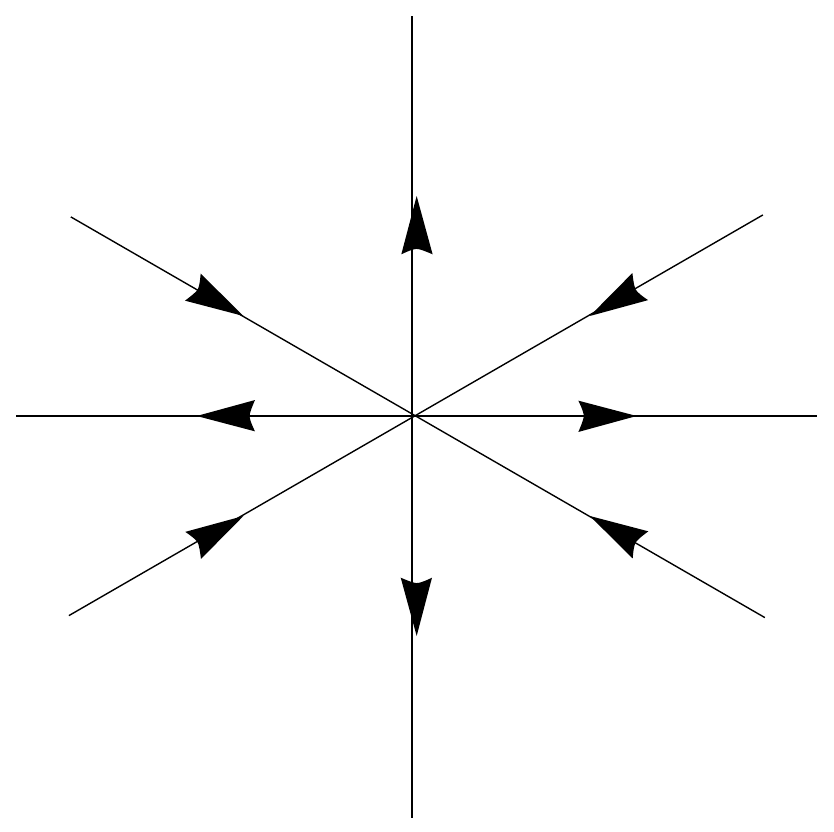}
}
\subfigure[$R>R_r$]{
\includegraphics[scale=0.4]{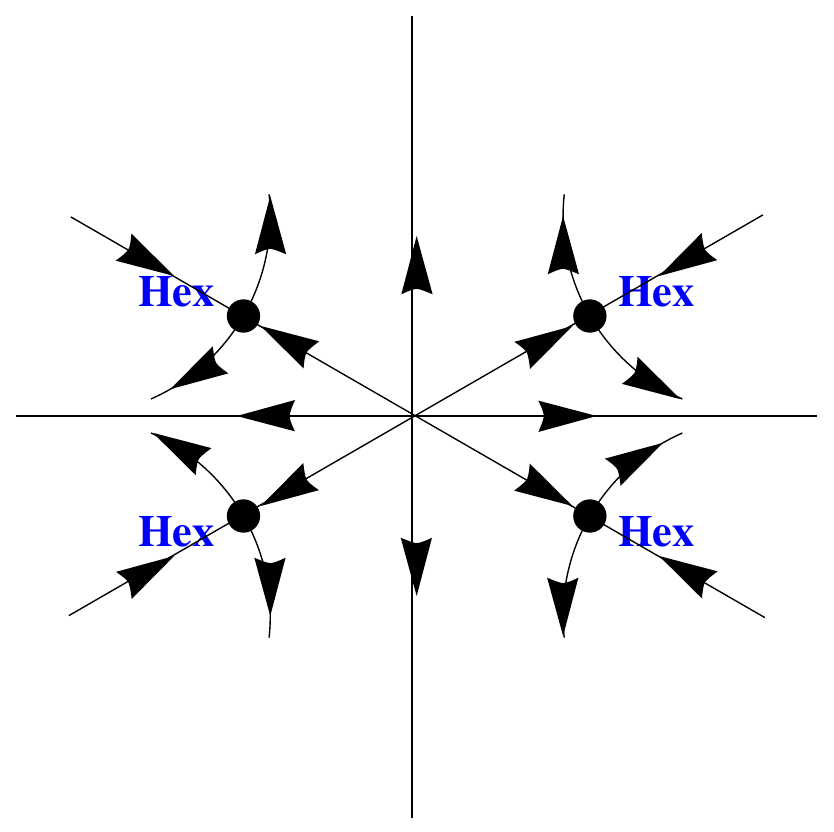}
}
\caption{Region $\mathcal{II}_1$.}
\label{I}
\end{figure}

\begin{figure}
\centering
\subfigure[$R<R_r$]{
\includegraphics[scale=0.4]{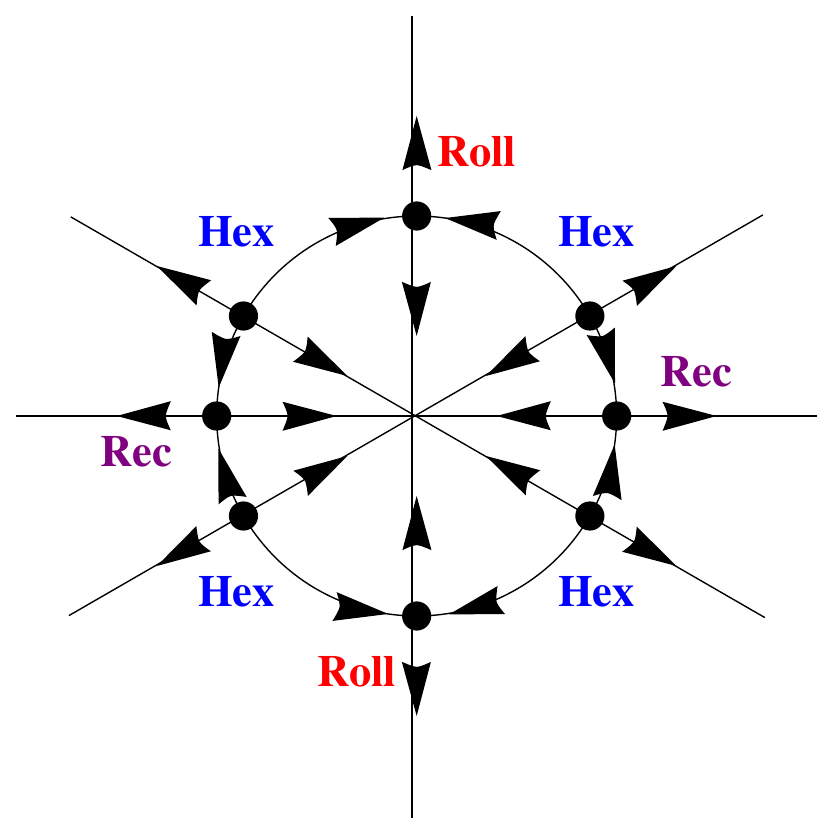}
}
\subfigure[$R\geq R_r$]{
\includegraphics[scale=0.4]{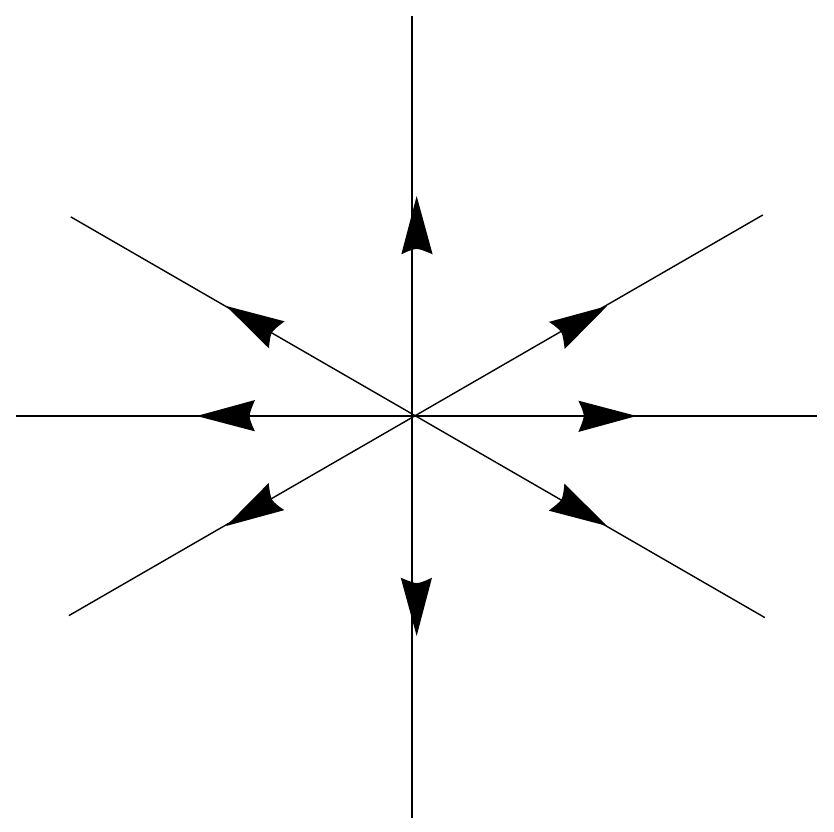}
}

\caption{Region $\mathcal{II}_2$.}
\label{II}
\end{figure}

\begin{figure}
\centering
\subfigure[$R<R_r$]{
\includegraphics[scale=0.4]{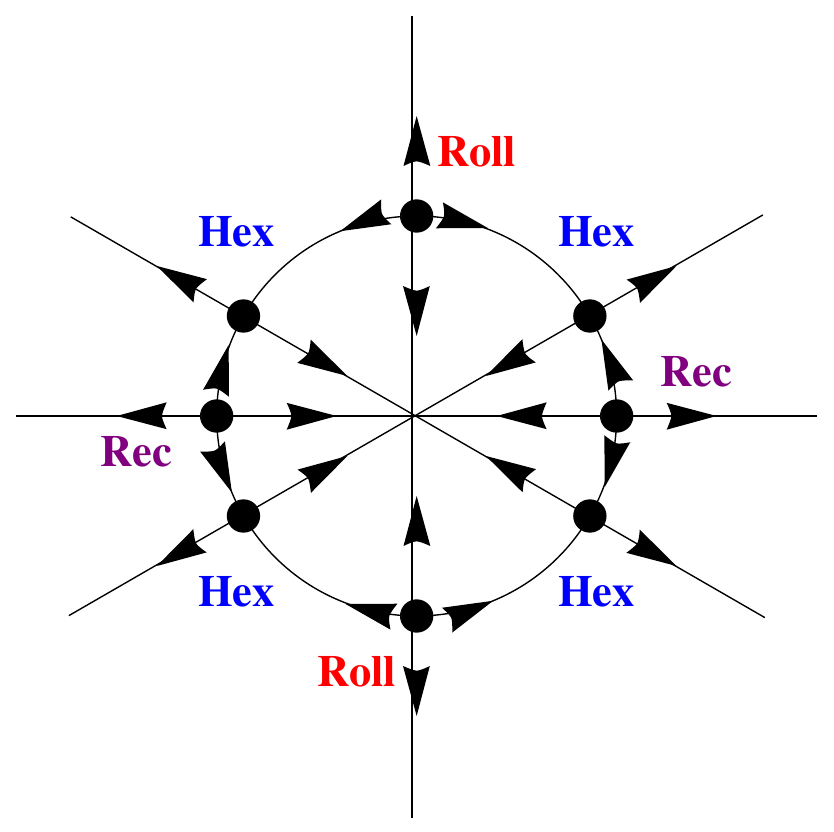}
}
\subfigure[$R\geq R_r$]{
\includegraphics[scale=0.4]{II2.pdf}
}

\caption{Region $\mathcal{II}_3$.}
\label{III}
\end{figure}

\begin{figure}
\centering
\subfigure[$R<R_r$]{
\includegraphics[scale=0.4]{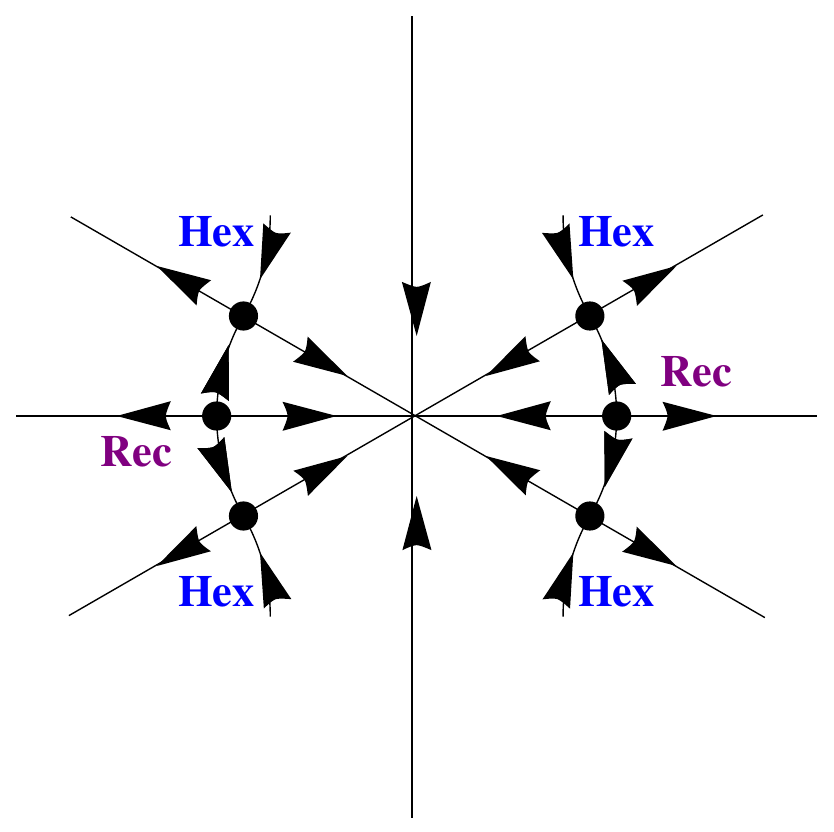}
}
\subfigure[$R=R_r$]{
\includegraphics[scale=0.4]{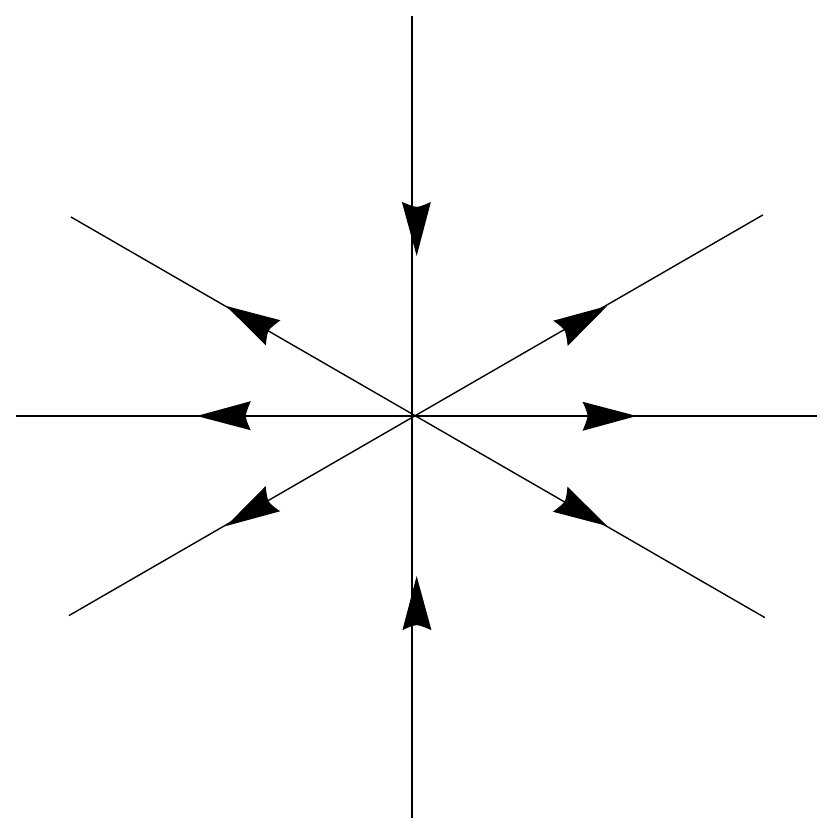}
}
\subfigure[$R>R_r$]{
\includegraphics[scale=0.4]{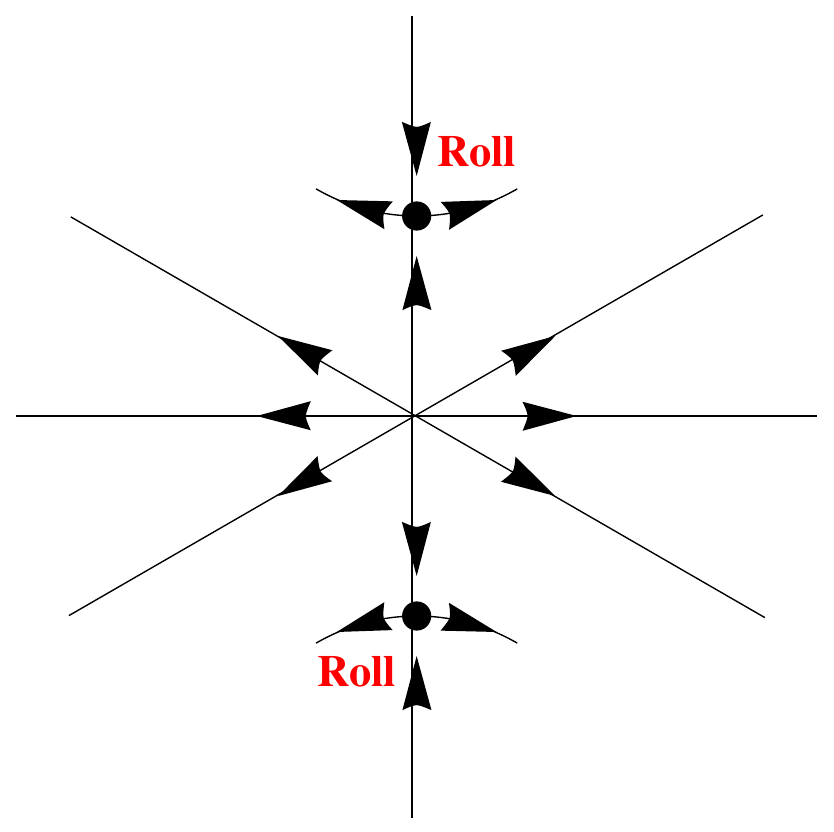}
}
\caption{Region $\mathcal{II}_4$.}
\label{IV}
\end{figure}

\begin{figure}
\centering
\subfigure[$R<R_r$]{
\includegraphics[scale=0.4]{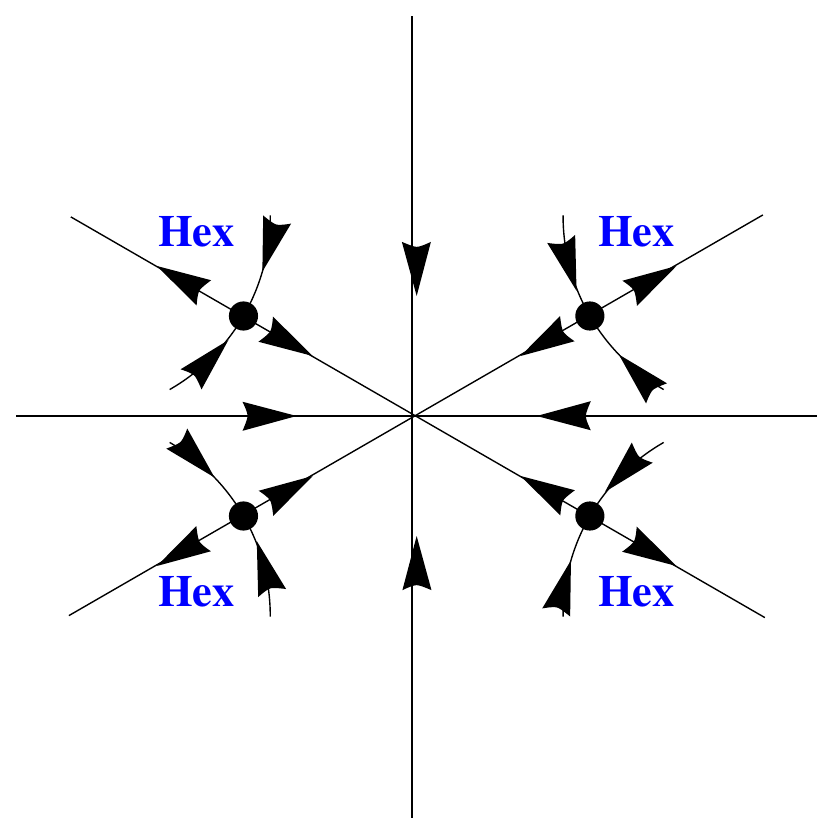}
}
\subfigure[$R=R_r$]{
\includegraphics[scale=0.4]{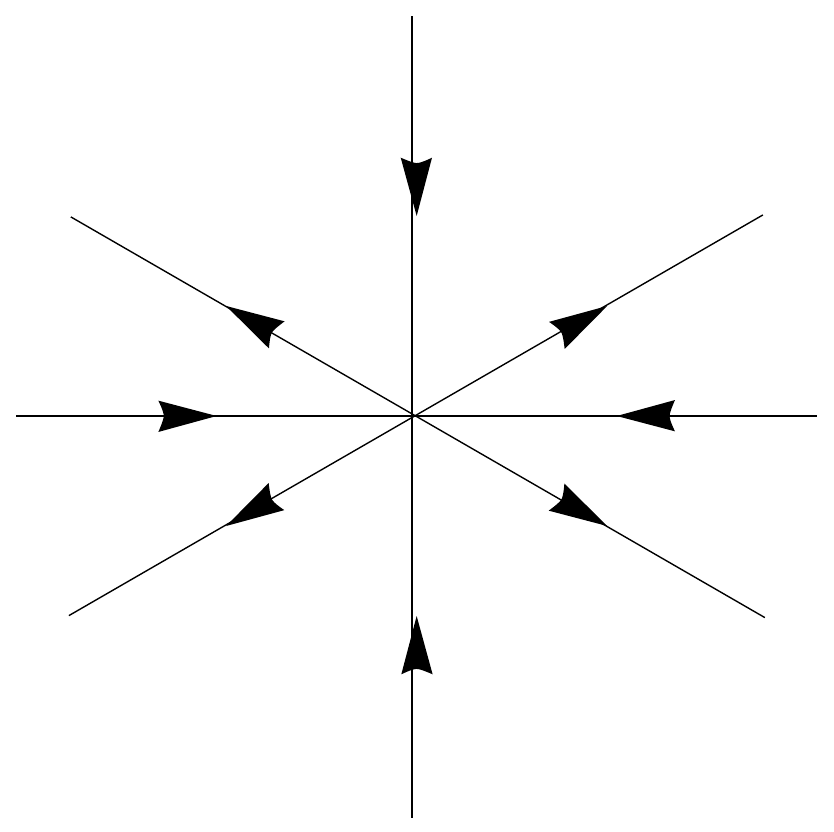}
}
\subfigure[$R>R_r$]{
\includegraphics[scale=0.4]{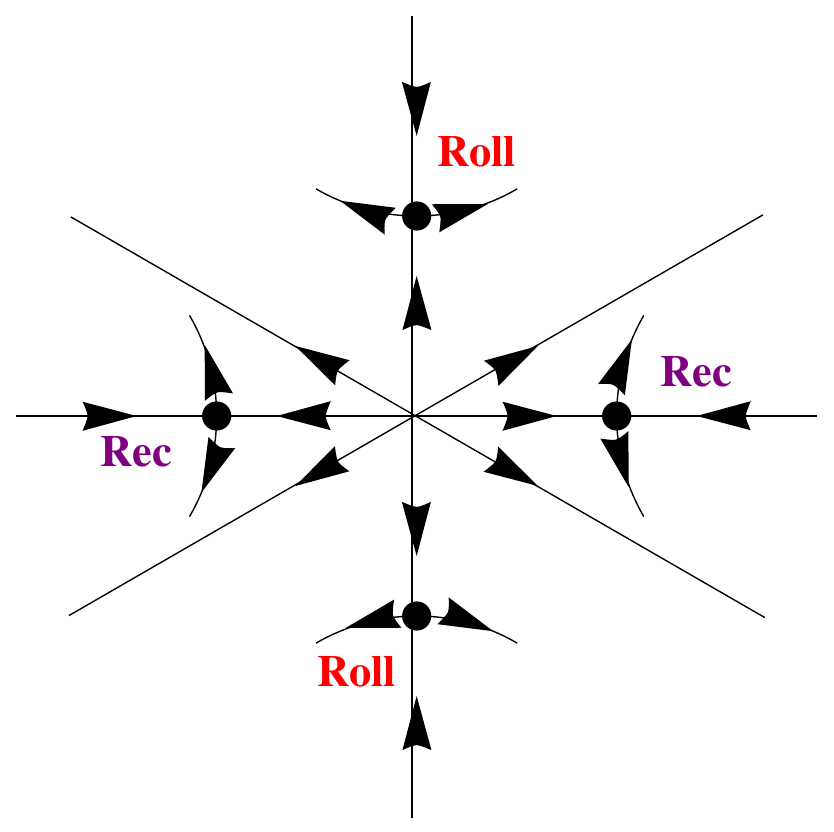}
}
\caption{Region $\mathcal{II}_5$.}
\label{V}
\end{figure}

\begin{figure}
\centering
\subfigure[$R<R_r$]{
\includegraphics[scale=0.4]{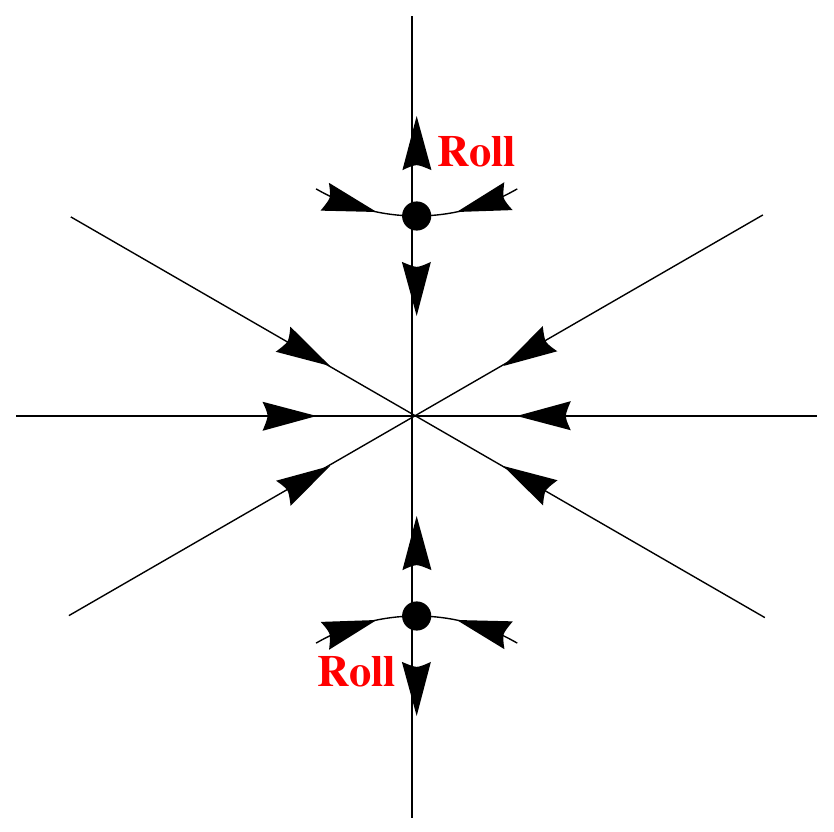}
}
\subfigure[$R=R_r$]{
\includegraphics[scale=0.4]{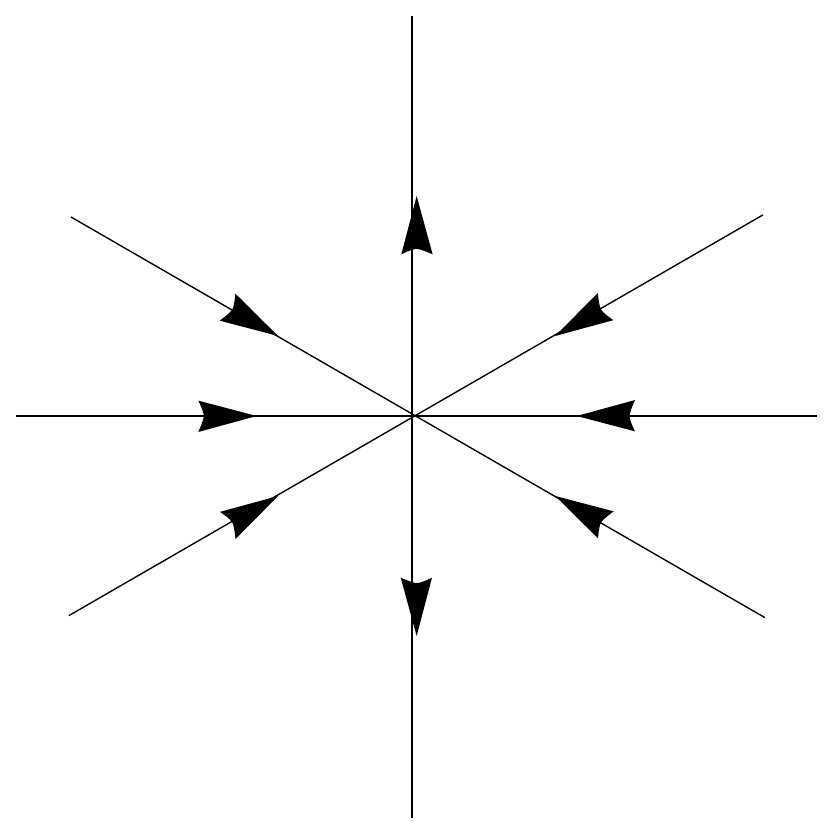}
}
\subfigure[$R>R_r$]{
\includegraphics[scale=0.4]{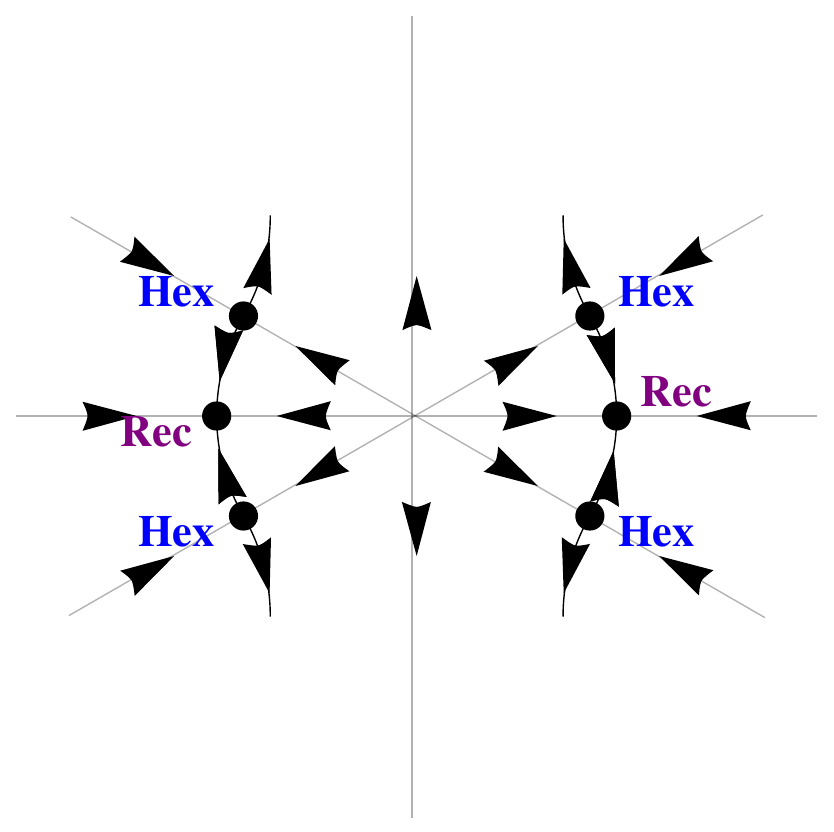}
}
\caption{Region $\mathcal{III}_1$.}
\label{VIII}
\end{figure}

\begin{remark}
To illustrate the conclusions of Theorem \ref{hexthm} we will consider two cases. Solving $b$ for $\mathfrak{p}_2$, we find that $b<0$ (resp. $b>0$) if $\mathfrak{p}_2^2>\sigma_{roll}$ (resp. $\mathfrak{p}_2^2<\sigma_{roll}$) where
\begin{equation} \label{sigmaroll}
\sigma_{roll}=\frac{2\pi^4(\pi^2-\alpha_J^2)Q}{\alpha_J^2\gamma_J^2(\pi^2 Q+\gamma_J^4)}.
\end{equation}First we will consider the case of small length scales $L_1$, $L_2$ such that the conditions \eqref{hexbox} is satisfied. Figure \ref{preferredhexQ10} shows some of the length scales for which this happens at $Q=10$.

\begin{figure}
\includegraphics[scale=.4]{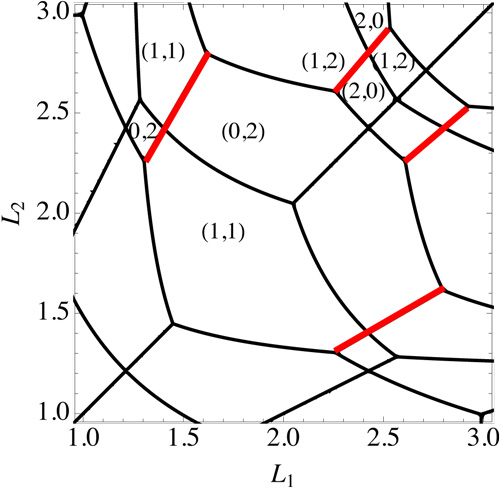}
\caption{The bold lines indicate where two modes satisfying \eqref{hexbox} are critical at $Q=10$.}
\label{preferredhexQ10}
\end{figure}

As an example we choose 
\begin{equation}\label{ex1}
Q=10,\quad L_1=3/2, \quad L_2=\sqrt{3}L_1,
\end{equation}
 Under these conditions, two modes satisfying \eqref{hexbox} will be critical with indices $I=(1,1,1)$ and $J=(0,2,1)$. The critical wave number is $\alpha_J=2\pi/L_2\approx2.42$. In this case $\sqrt{\sigma_{roll}}\approx0.5$. Calculating $a/b$, we find that under the conditions \eqref{ex1}, the system is in region $\mathcal{III}_1$ for $\mathfrak{p}_2<\sqrt{\sigma_{roll}}$ and will move into $\mathcal{I}_2$ as $\mathfrak{p}_2$ crosses $\sqrt{\sigma_{roll}}$.

Next we consider the limit of large length scales $L_1>>1$, $L_2=\sqrt{3}L_1$. In this case, the critical wave number is given by \eqref{Qalpharoll}. If we take $Q=10$ again, then we find $\alpha_J=2.59$. Once again, the system moves from region $\mathcal{III}_1$ to region $\mathcal{I}_2$, this time at $\mathfrak{p}_2=\sqrt{\sigma_{roll}}\approx0.39$.

\end{remark}

\subsection{Transitions from complex simple eigenvalues}

Under the conditions $\mathfrak{p}_2<1$ and $Q>Q_0$, the first critical eigenvalues will be a pair of complex numbers. To demonstrate the main ideas and for simplicity, we only consider the transition from a simple complex eigenvalue with index $J_c=(j_c,0,1)$ or $J_c=(0,k_c,1)$ for $j_c,k_c\geq1$.
\begin{theorem}
\label{Complex MHD}Consider $b$ which is defined by \eqref{b1}. Assume $\mathfrak{p}_2<1$ and $Q>Q_{0}$ and assume that the critical index $J_c=(j_c,0,1)$ or $J_c=(0,k_c,1)$ then we have the
following assertions:

\begin{enumerate}
\item If $b<0$ then the problem undergoes a Type-I transition at $R=R_{c}$
and bifurcates to a periodic solution on $R>R_{c}$ which is an attractor,
the periodic solution can be expressed as%
\begin{align*}
& \psi =x\left( t\right) Re\psi _{J_{c}}+y\left( t\right) Im\psi
_{J_{c}}+o\left( \left\vert \alpha \left( \lambda \right) \right\vert
^{1/2}\right) , \\
& x\left( t\right) =\left( \frac{\alpha \left( \lambda \right) }{\left\vert
b\right\vert }\right) ^{1/2}\cos \rho t, \\
& y\left( t\right) =\left( \frac{\alpha \left( \lambda \right) }{\left\vert
b\right\vert }\right) ^{1/2}\sin \rho t.
\end{align*}
In particular, if $Q$ is sufficiently large or $\rho$ is sufficiently small then the transition is Type-I.

\item If $b>0$ then the transition is a jump transition and the system
bifurcates on $R<R_{c}$ to a unique unstable periodic orbit.
\end{enumerate}
\end{theorem}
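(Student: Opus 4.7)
My plan is to use the dynamic transition framework of Ma and Wang. By Proposition \ref{PrExSt}(b), a pair of simple complex eigenvalues $\beta^{1,2}_{J_c}(R)=\alpha(R)\pm i\beta(R)$ of $L_R$ crosses the imaginary axis transversally at $R=R_c$, with $\alpha(R_c)=0$ and $\beta(R_c)=\rho>0$, while all remaining eigenvalues stay bounded away from the imaginary axis in the left half-plane. The problem therefore fits the standard Hopf-type setting in infinite dimensions, and the type of transition is determined by the sign of the first Lyapunov coefficient on the two-dimensional center manifold; this coefficient is exactly what the statement calls $b$.

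First I would decompose $H=E_c\oplus E_s$ with $E_c=\text{span}\{\text{Re}\,\psi_{J_c},\text{Im}\,\psi_{J_c}\}$, and write solutions near the trivial state as $\psi = x\,\text{Re}\,\psi_{J_c}+y\,\text{Im}\,\psi_{J_c}+\Phi(x,y,R)$ with the center-manifold graph $\Phi\in E_s$. The projection onto $E_c$ is in principle realized through the adjoint eigenvectors of $L_R$, but since $L_R$ is non-selfadjoint and mixes $u$, $H$, $T$, these adjoint modes are cumbersome. Following the reduction strategy announced in the introduction, I would instead introduce an adapted trigonometric basis that is invariant (up to finite block structure) under $L_R$, so that both $L_R$ and the trilinear form $G$ in \eqref{trilinear} reduce to explicit finite-dimensional operations indexed by wavenumbers.

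Second, I would compute $\Phi$ to quadratic order by solving, on each stable Fourier block, the linear equation $L_{R_c}\Phi_{JK}=-\mathcal{P}_s\,G_s(\psi_J,\psi_K)$ for every quadratic self-interaction of the critical modes, where $\mathcal{P}_s$ is the projection onto $E_s$. Because $J_c$ is a roll index $(j_c,0,1)$ or $(0,k_c,1)$, only harmonics with horizontal index $(2j_c,0,\cdot)$ or $(0,2k_c,\cdot)$ and vertical index $0$ or $2$ appear, so only these blocks need to be inverted. Substituting $\Phi$ back into the equation and projecting onto $E_c$, the reduced system in $z=x+iy$ takes the Hopf normal form
\begin{equation*}
\dot z = \bigl(\alpha(R)+i\rho\bigr)z + b\,|z|^2 z + o(|z|^3).
\end{equation*}
In polar coordinates this is $\dot r=\alpha(R)r+br^3$, giving the dichotomy: $b<0$ produces a supercritical (Type-I) Hopf bifurcation with attracting limit cycle of radius $\sqrt{\alpha(R)/|b|}$ for $R>R_c$, whereas $b>0$ produces a subcritical (Type-II) bifurcation with an unstable periodic orbit on $R<R_c$. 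The explicit expressions for $x(t)$, $y(t)$ asserted in the theorem are then the leading-order parametrization of this cycle, and the attraction of a full neighborhood follows from the abstract attractor bifurcation theorem in \cite{ptd} applied to the reduced two-dimensional flow.

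The principal difficulty will be the explicit computation of $b$ and the sign analysis in the asymptotic regimes. Three distinct nonlinearities in $G$ contribute to $b$ via resonance-plus-center-manifold correction terms, and the denominators of the center-manifold coefficients depend on the spectrum of $L_{R_c}$ on each stable harmonic, which in turn depends on $\mathfrak{p}_1$, $\mathfrak{p}_2$, $Q$, and implicitly on $\rho$ through $R_c$. For the sufficient conditions for Type-I, I would expand $b$ in the limit $Q\to\infty$, using the asymptotics of $R_c$ and $\rho$ from \eqref{Rc} and \eqref{rho}, and separately in the limit $\rho\to 0$; in both regimes I expect the contribution coming from the magnetic advection $(H\cdot\nabla)H$ together with the leading damping denominators to dominate, forcing $b<0$ and hence a Type-I transition.
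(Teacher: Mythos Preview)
Your overall architecture is correct and matches the paper: center-manifold reduction to a planar system, identification of the only contributing stable harmonics (the $(0,0,2)$ and $(2j_c,0,0)$ blocks for a roll index), derivation of a Hopf normal form, and sign analysis of the Lyapunov coefficient $b$ in the regimes $Q\to\infty$ and $\rho\to 0$.

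There is, however, a genuine gap in your quadratic center-manifold step. The equation you propose, $L_{R_c}\Phi_{JK}=-\mathcal{P}_s\,G_s(\psi_J,\psi_K)$, is the approximation formula \eqref{cmrealformula} for \emph{real} critical eigenvalues. In the complex case the graph $\Phi(x,y)$ inherits the oscillation of the center variables: along trajectories $x\sim\cos\rho t$, $y\sim\sin\rho t$, so the quadratic monomials $x^2,xy,y^2$ oscillate at frequency $2\rho$, and the invariance condition reads, to leading order,
\[
\rho\,(y\,\partial_x-x\,\partial_y)\Phi \;=\; \mathcal{L}_R\Phi + P_2 G(\psi^c,\psi^c).
\]
On each stable block this shifts the operator to be inverted by $\pm 2i\rho$. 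The paper uses the equivalent higher-order formula from \cite{ptd},
\[
\bigl((-\mathcal{L}_R)^2+4\rho^2\bigr)(-\mathcal{L}_R)\Phi
= \bigl((-\mathcal{L}_R)^2+4\rho^2\bigr)P_2G(z)
  -2\rho^2 P_2G(z)+2\rho^2 P_2G(\tilde z)+\rho(-\mathcal{L}_R)(\cdots),
\]
and this is why all the center-manifold coefficients $A_1,\dots,A_6$ carry denominators $16\pi^4+4\rho^2$ and $16\mathfrak{p}_2^2\alpha_{J_c}^4+4\rho^2$. With your formula these $4\rho^2$ corrections are lost, the resulting $b$ is wrong, and the large-$Q$ asymptotics (where $\rho^2\sim Q$ by \eqref{Qinfty}) cannot be carried out as you describe.

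A second, smaller point: the reduced planar system does not emerge directly in the rotationally symmetric form $\dot z=(\alpha+i\rho)z+b|z|^2z$. It has general cubic coefficients $a^{1}_{30},a^{1}_{21},\dots,a^{2}_{03}$ as in \eqref{complex-reduced-2}, and the transition is governed by the specific combination $\tfrac{3}{4}(a^{1}_{30}+a^{2}_{03})+\tfrac{1}{4}(a^{1}_{12}+a^{2}_{21})$ (the first Lyapunov coefficient, cf.\ \eqref{firstb}). You should make this averaging/normal-form step explicit rather than assuming the normal form from the outset.
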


\section{Proof of the Main Theorems}

\subsection{Reduction strategy}
In dynamic transition problems, the method of proof relies heavily on the reduction of the problem to the center
manifold in the first unstable eigendirections. The center manifold is usually expanded by the eigenmodes of the linear operator of the problem and the coefficients of the the expansion are found by the approximation formulas derived in \cite{ptd}. However, for our problem, such an analysis is not suitable because some eigenfunctions of the linear operator that are needed to compute the nonlinear interactions have complicated forms. In particular the eigenvalues of \eqref{Lin-EV-Cpct} corresponding to indices $(j,k,l)$ with $l\geq2$, $j^2+k^2\neq0$ can only be found by first computing the roots of \eqref{polynom}. It is easy to see that these roots will have complicated forms involving the parameters of the system. This makes it very difficult to compute their contributions in the nonlinear interactions. We will overcome this difficulty by expanding the center manifold in terms of the eigenfunctions of the Laplacian operator with the same boundary conditions. 

In the case where the first critical eigenvalues are real, the critical eigenfunctions can be found by setting $\beta=0$ in \eqref{Lin-EV-Cpct}. Thanks to the boundary conditions, the eigenvectors can be expressed using the separation of variables. For example the vertical component of the velocity field becomes:
\begin{equation}
\label{seperationofvar}
\begin{aligned}
& w =W_J\cos \frac{j_1 \pi  x_{1}}{L_1}\cos \frac{j_2 \pi x_2}{L_2}\sin j_3\pi x_{3},  \\
\end{aligned}
\end{equation}
and analogous expressions hold for the other components of the solutions of the linear equations. Here $J=\left( j_1,j_2,j_3\right) $ with nonnegative integers $j_1,j_2,j_3$. It is clear that critical indices must have $j_3=1$ as the minimum in \eqref{Rr} is achieved for $j_3=1$ and moreover we must have $\alpha_J>0$.
When $\alpha_J\neq 0$, the amplitudes $U_J$, $V_J$ of the horizontal velocity field and $H_{1,J}$, $H_{2,J}$ of the horizontal magnetic field depend on the vertical components $W_J$ and $H_{3,J}$ as follows:
\begin{equation}\label{horizontals}
\begin{aligned}
& U_J=-\frac{j_1 \pi}{L_1 \alpha_J^2}j_3  \pi W_J,\qquad V_J=-\frac{j_2 \pi}{L_2 \alpha_J^2}j_3 \pi W_J,\\
& H_{1,J}=\frac{j_1 \pi}{L_1 \alpha_J^2}j_3  \pi H_{3,J},\qquad H_{2,J}=\frac{j_2 \pi}{L_2 \alpha_J^2}j_3 \pi H_{3,J}.\\
\end{aligned}
\end{equation}
Thus the critical eigenfunctions are determined by three quantities: the amplitude $W_J$ of the vertical velocity field, the amplitude $H_{3,J}$ of the vertical magnetic field and the amplitude $\Theta_J$ of the temperature field. These can be found as:
\[
W_J=\gamma_J^2, \qquad \Theta_J=1, \qquad H_{3,J}=\pi.
\]
Next we need to analyze the conjugate critical eigenfunctions which are the solutions of the conjugate eigenvalue problem at $\beta=0$:
\begin{equation}  \label{Conj-1}
\begin{aligned} &\mathfrak{p}_1\left( \Delta u^{\ast }-\nabla p^{\ast }\right)
-\mathfrak{p}_2\frac{\partial H^{\ast }}{\partial x_{3}}+T^{\ast
}\overrightarrow{k} =0, \\ &\Delta T^{\ast
}+R\mathfrak{p}_1u_{3}^{\ast } =0, \\ &\mathfrak{p}_2\Delta
H^{\ast }-Q\mathfrak{p}_1\frac{\partial u^{\ast }}{\partial x_{3}} =0, \\ & \text{div}u^{\ast } =\text{div}H^{\ast }=0. \end{aligned}
\end{equation}
The conjugate eigenfunctions can be found as:
\[
W_J^{\ast}=\mathfrak{p}_2 \gamma_J^2, \qquad \Theta_J^{\ast}=\mathfrak{p}_2 \mathfrak{p}_1 R_r \qquad H_{3,J}^{\ast}=-\mathfrak{p}_1 Q \pi.
\]
Now we turn to eigenvalue problem of the Laplacian operator:
\[ \Delta e_J=-\gamma_J^2 e_J. \]
The Laplacian operator with boundary conditions \eqref{B.C.} has a complete set of orthogonal eigenvectors $e_J=(\vec{u}_J,T_J,\vec{H}_J)$ which can be expressed by the same separation of variables \eqref{seperationofvar}:
  
Again there are only three independent unknowns, namely the amplitudes of the vertical velocity $w$, the temperature $T$ and the vertical magnetic field $H_3$ which are denoted by $W$, $\Theta$ and $H_3$ respectively.
 \begin{itemize}
 \item If $j_1^2+j_2^2=0$ and $j_3\neq0$, then
 \begin{equation}\label{ev1}
 e_J=(0,T_J,0), \qquad \Theta_J=1.
 \end{equation}
 \item If $j_1^2+j_2^2\neq 0$ and $j_3=0$, then
 \begin{equation}\label{ev2}
 e_J=(0,0,\vec{H}_J), \qquad H_{3,J}=1.
 \end{equation}
In this case, although there are eigenmodes which have the form $e=(\vec{u},0,0)$ with $\vec{u}=(u,v,0)$, it can be checked that these modes do not affect the calculations of the nonlinear interactions as they will be not be present in the lowest order approximation of the center manifold function. 
\item  If $j_1^2+j_2^2\neq 0$ and $j_3\neq0$, then the multiplicity of an eigenvalue is three and the eigenvectors are:
 \begin{equation}\label{ev3}
\begin{aligned}
&e_J^1=(\vec{u}_J,0,0), \qquad W_J=1,\\
&e_J^2=(0,T_J,0), \qquad \Theta_J=1,\\
&e_J^3=(0,0,\vec{H}_J), \qquad H_{3,J}=1.
\end{aligned}
\end{equation}
\end{itemize}

Proposition \ref{PrExSt} gives a critical set of indices $\mathcal{C}=\{J_1,J_2,\dots,J_k\}$ such that the eigenvalues having these indices are the first critical eigenvalues.         The dynamics on the center manifold can be captured as follows. We first write:
$$
\psi=\psi^c+\Phi(x), \qquad \psi^c=\sum_{J\in\mathcal{C}}x_J\psi_J,
$$
where $\Phi$ is the center manifold function and $x_J=x_J(t)\in\mathbb{R}$ are the amplitudes of the critical modes.
 Multiplying the governing evolution equations by the conjugate eigenvectors $\psi_J^{\ast}$, $J\in\mathcal{C}$ we get:
\begin{equation}\label{red1}
\frac{dx_J}{dt}=\beta_J(R) x_J+\frac{(G(\psi,\psi),\psi_J^{\ast})}{(\psi_J,\psi_J^{\ast})}.
\end{equation}
When the linear part of \eqref{red1} is diagonal, we have the following approximation of the center manifold; (see \cite{ptd}):
\begin{equation}\label{cmrealformula}
-\mathcal{L}_{R}\Phi \left( x,R\right) =P_{2}G\left( \psi
^c\right) +o(2),
\end{equation}
where $\mathcal{L}_{R}=L_{R}\mid _{E_{2}}$, $P_2$ is the projection onto the span of the critical eigenvectors and 
\begin{equation*}
o(n)=o\left( \left\vert x\right\vert ^{n}\right) +O\left( \left\vert
x\right\vert ^{n}\left\vert \beta\left( R\right) \right\vert
\right) .
\end{equation*}%
As we have said, we will utilize an expansion of the center manifold using eigenfunctions $e_J$ of the Laplacian operator:
\[
\Phi=\sum_S  \Phi_S e_S.
\]  The crucial point here is that $P_2 e_J=0$ if $J\in\mathcal{C}$ . This is because the eigenvectors $\phi_J$ of the original linear operator and $e_J$'s have the same form with different coefficients due to separation of variables. So
$$\langle \phi_J,e_K\rangle=0, \quad \text{for $J\neq K$}.$$

Now, to calculate this approximation we need to compute terms of the form $\langle e_S,\mathcal{L_R}^{\ast}e_S\rangle$. If we denote $S=(s_1,s_2,s_3)$, then for $S\notin\mathcal{C}$, there are three cases to consider:
\begin{itemize}
 \item If $s_1^2+s_2^2=0$ and $s_3\neq0$, then
 \begin{equation*}
 \langle e_S,\mathcal{L}_R^{\ast} e_S\rangle=-s_3^2 \pi^2\langle e_S,e_S\rangle.
 \end{equation*}
Using \eqref{cmrealformula}, the $e_S$ coefficient of the center manifold is:
 \begin{equation} \label{CM1}
 \Phi_S=\frac{\langle G(\psi^c),e_S\rangle}{s_3^2\pi^2 \langle e_S,e_S\rangle}+o(2).
 \end{equation}
 
 \item If $s_1^2+s_2^2\neq 0$ and $s_3=0$, then
\[
 \langle e_S,\mathcal{L}_R^{\ast} e_S\rangle=-\mathfrak{p}_2 \alpha_S^2\langle e_S,e_S\rangle.
\]
Using \eqref{cmrealformula}, the $e_S$ coefficient of the center manifold is:
  \begin{equation} \label{CM2}
 \Phi_S=\frac{\langle G(\psi^c),e_S\rangle}{\mathfrak{p}_2 \alpha_S^2\langle e_S^2,e_S^2\rangle}+o(2).
 \end{equation}
\item  If $s_1^2+s_2^2\neq 0$ and $s_3\neq0$, then we define:
 \begin{equation}
 \langle e_S^m,\mathcal{L}_R^{\ast} e_S^n\rangle=v_S\,A_{S}^{m,n}, \qquad m,n=1,2,3,
 \end{equation}
$A_S$ can be found by plugging the eigenvectors \eqref{ev3} into \eqref{Conj-1}:
 \renewcommand\arraystretch{1.5}
\renewcommand\tabcolsep{3pt}
 \[ A_S=\left[ \begin{array}{ccc}
-\mathfrak{p}_1\,\gamma_S^4\,\alpha_S^{-2} & R \mathfrak{p}_1 & -Q \mathfrak{p}_1\,\gamma_S^2 s_3\pi\alpha_S^{-2} \\
1 & -\gamma_S^2 & 0 \\
 \mathfrak{p}_2\,\gamma_S^2\,s_3 \pi\alpha_S^{-2} & 0 &- \mathfrak{p}_2\,\gamma_S^4\,\alpha_S^{-2} \end{array} \right].\] 
Here
\begin{equation*}
v_S=\int_{0}^{L_1}\int_{0}^{L_2}\int_{0}^{1}\cos^2\frac{s_1\pi x_1}{L_1}\cos^2\frac{s_2\pi x_2}{L_2}\cos^2s_3\pi x_3.
\end{equation*}
Clearly $v_S=\frac{L_1L_2}{2^n}$ where $n$ is the number of sub-indices of $S=(s_1,s_2,s_3)$ which are nonzero.
In this case, denoting 
\[\Phi_S=\left[ \begin{array}{c} \Phi_S^1 \quad \Phi_S^2 \quad \Phi_S^3\end{array}\right]^T, \]
where $\Phi_S^i$ denotes the coefficient of $e_S^i$, we have:
\begin{equation}\label{CM3}
v_S A_S \Phi_S= \left[ \begin{array}{c} \langle G(\psi^c),e_S^1\rangle \quad  \langle G(\psi^c),e_S^2\rangle \quad \langle G(\psi^c),e_S^3\rangle \end{array}\right]^T.
\end{equation}
By \eqref{Rr}, we see that for $S\notin\mathcal{C}$,
\[ detA_S\mid_{R=R_r}=-\frac{\mathfrak{p}_1\mathfrak{p}_2\gamma_S^4}{\alpha_S^4}(-R_r \alpha_S^2+s_3^2\pi^2 Q \gamma_S^2+\gamma_S^6)<0. \]
\end{itemize}
This ensures that \eqref{CM3} can always be solved when $S\notin\mathcal{C}$.
\subsection{Proof of Theorem \protect\ref{rollthm}.}
Now let there be a single critical index $J=(j,0,1)$  in $\mathcal{C}$. A simple calculation yields that
\[
G(\psi_J,\psi_J,e_{2j,0,2})=0.
\]
Thus 
\begin{equation}
\Phi(x)= x^2 \sum_{S\in\mathcal{S}_{roll}}\Phi_Se_{S}+o(2),
\end{equation}
where
\[\mathcal{S}_{roll}=\{(0,0,2),(2j,0,0)\}.\]
To compute the coefficients $\Phi_S$, we multiply \eqref{cmrealformula} by $e_S$ to get
\begin{equation}\label{RollCM}
\Phi_S=\frac{-G(\psi_J,\psi_J,e_S)}{\langle e_S,\mathcal{L}^{\ast}e_S\rangle}, \qquad S\in\mathcal{S}_{roll}.
\end{equation}
Thus we can write \eqref{red1} as:
\begin{equation}\label{realreducedfinal}
 \frac{dx}{dt}=\beta(R)x+b x^3+o(3),
\end{equation}
where
\begin{equation} \label{bJ} 
b=\sum_{S\in\mathcal{S}_{roll}}\frac{\Phi_S G_s(e_S,\psi_J,\psi_J^{\ast})}{\langle\psi_J,\psi_J^{\ast}\rangle}.
\end{equation}
Using \eqref{CM1} and \eqref{CM2}, we find that
\begin{equation}\label{rollcomp1}
\Phi_{0,0,2}=-\frac{\gamma^2}{8\pi}, \qquad
\Phi_{2j,0,0}=\frac{\pi^2\gamma^2}{4 \mathfrak{p}_2 \alpha^2}.
\end{equation}
Also directly computing, we have,
\begin{equation}\label{rollcomp2}
\begin{aligned}
& G_s(\psi_J,e_{0,0,2},\psi_J^{\ast})=\frac{L_1 L_2}{4} \mathfrak{p}_1\mathfrak{p}_2 \pi\gamma^2 R, \\
& G_s(\psi_J,e_{2j,0,0},\psi_J^{\ast})=\frac{L_1 L_2}{4 \alpha^2} \mathfrak{p}_1 \pi^2\gamma^2(\pi^2-\alpha^2) Q.
\end{aligned}
\end{equation}

\begin{equation}\label{psirpsir}
\langle\psi_J,\psi_J^{\ast}\rangle=\frac{L_1 L_2\gamma^2}{4 \alpha^2} (\mathfrak{p}_2\gamma^4(1+\mathfrak{p}_1)+\mathfrak{p}_1(\mathfrak{p}_2-1)\pi^2 Q) .
\end{equation}

When $\mathfrak{p}_2\geq 1$, or if $\mathfrak{p}_2<1$ and $Q<Q_0$, (see \eqref{Q0}) we see that $\langle\psi_J,\psi_J^{\ast}\rangle>0.$ Thus $\langle\psi_J,\psi_J^{\ast}\rangle$ has no effect on the transition.
Next plugging \eqref{rollcomp1}-\eqref{rollcomp2} into \eqref{bJ}, we find that at the critical Rayleigh number,
\begin{equation} \label{signofcubicroll}
 b=\frac{L_1 L_2\gamma^4}{32\mathfrak{p}_2 \alpha^4\langle \psi_J,\psi_J^{\ast}\rangle}( 2\pi^4 Q(\pi^2-\alpha_J^2)-\mathfrak{p}_2^2\alpha_J^4R_r).
\end{equation}

The transition in this case depends on $b$.  If $b<0$ then the transition is Type-I and the roll pattern with index $J$ is stable after the transition. On the other hand if $b>0$ then the transition is Type-II and the roll pattern with index $J$ is unstable after  the transition. Note that $b$ in \eqref{signofcubicroll} and $b$ in \eqref{a,b} have the same sign. That finishes the proof of  the first assertion.

Now we will prove the second assertion. For this let $J=(j,k,1)$ with $j\neq0$, $k\neq0$ and consider 
\[\mathcal{S}_{rec}= \{(0,0,2), (2j,2k,0), (2j,0,0), (0,2k,0), (0,2k,2), (2j,0,2)\}. \] 

It is easy to check that the center manifold function is:
\begin{equation} \label{rechex}
\begin{aligned}
\Phi=x^2\sum_{S\in\mathcal{S}_{rec}} \Phi_{S}e_{S}+O(3).
\end{aligned}
\end{equation}
As in the proof of the first assertion, the crucial parameter to find is the coefficient of the cubic term of the reduced equation which is given in a similar fashion as $b$ in \eqref{bJ}:
\begin{equation}\label{aJ}
a=\sum_{S\in\mathcal{S}_{rec}}\frac{\Phi_{S} G_s(e_{S},\psi_J,\psi_J^{\ast})}{\langle\psi_J,\psi_J^{\ast}\rangle}.
\end{equation}
By \eqref{CM1} and \eqref{CM2}, we can compute:
\begin{equation}\label{recres1}
\begin{aligned}
& \Phi_{S}=\frac{\pi^2\gamma^2}{8\mathfrak{p}_2\alpha^2}, \quad S=(2j,2k,0),(2j,0,0),(0,2k,0),\\
& \Phi_{0,0,2}=-\frac{\gamma^2}{16\pi}.
\end{aligned}
\end{equation}
By \eqref{CM3} we can compute for $S=(2j,0,2),(0,2k,2)$,
\begin{equation}\label{recres2}
\left[ \begin{array}{c} \Phi_S^1\\\Phi_S^2\\\Phi_S^3\end{array}\right]=-\frac{ \pi \gamma^2(4\alpha^2-\alpha_S^2)(R+\eta_S)}{16\alpha^2\gamma_S^2(R-R_S)}\left[\begin{array}{c}  
\gamma_S^2\\ 
\mathfrak{p}_1\mathfrak{p}_2 \alpha^2(R_S+\eta_S)(R+\eta_S)^{-1}\\ 
2\pi
\end{array}\right],
\end{equation}
where $\eta$ and $R_S$ are defined in \eqref{kappa}.
To compute $a$, we also need the following:
\begin{equation}\label{recres3}
\begin{aligned}
& G_s(\psi_J,e_{2j,2k,0},\psi_J^{\ast})=\frac{L_1 L_2}{16\alpha^2}\mathfrak{p}_1\pi^2 Q\gamma^2(\pi^2-\alpha^2),\\
& G_s(\psi_J,e_S,\psi_J^{\ast})=\frac{L_1 L_2}{8\alpha^4}\mathfrak{p}_1\pi^2 Q\gamma^2((-\alpha^2+\frac{1}{2}\alpha_S^2)\pi^2-\alpha^4) \quad S=(2j,0,0),(0,2k,0),\\
& G_s(\psi_J,e_{0,0,2},\psi_J^{\ast})=\frac{L_1 L_2}{8}\mathfrak{p}_1\mathfrak{p}_2 \pi R\gamma^2,
\end{aligned}
\end{equation}
For $S=(2j,0,2),(0,2k,2)$, we have:
\begin{equation}\label{recres4}
\left[ \begin{array}{c} 
G_s(\psi_J,e_S^1,\psi_J^{\ast})\\
G_s(\psi_J,e_S^2,\psi_J^{\ast})\\
G_s(\psi_J,e_S^3,\psi_J^{\ast})
\end{array}\right]
=\frac{L_1 L_2  \pi \gamma^2(4\alpha^2-\alpha_S^2)}{64\alpha^4}
\left[
\begin{array}{c}  
\mathfrak{p}_2\gamma^4-\mathfrak{p}_1\pi^2 Q \\  
\mathfrak{p}_1\mathfrak{p}_2 R \alpha^2 \\
-2\mathfrak{p}_1\pi Q\gamma^2
\end{array}\right].
\end{equation}
Now using \eqref{recres1}, \eqref{recres2} and after some simplification, we find that $a$  in \eqref{aJ} has the same sign as $a$ in \eqref{a,b}. The rest of the proof is same as in the first assertion. That proves the second assertion.

Finally, notice that by using \eqref{polynom}, the first critical eigenvalue can be written as $\beta(R)=c (R-R_c)+o(R-R_c)$ where $c\neq0$ is a constant. That proves the third assertion and the  proof of the Theorem \ref{rollthm}  is finished.

\subsection{Proof of Corollary \ref{rollcorollary}}
For the proof we will need the following lemma.
\begin{lemma}\label{minalpha} Let $\alpha_{J_r}$ be the critical wave number minimizing \eqref{Rr}. Then
$$\alpha_{J_r}\geq \frac{\pi}{2^{1/3}(2^{2/3}+1)^{1/2}},$$ for all $L_1,L_2,Q$.
\end{lemma}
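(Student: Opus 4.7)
The plan is to reduce the minimization in \eqref{Rr} to a one-variable problem by first dispatching the $j_3$-dependence, and then to exploit the fact that the admissible wave numbers $\alpha_J$ are closed under the map $J=(j_1,j_2,1)\mapsto J'=(2j_1,2j_2,1)$, which gives $\alpha_{J'}^2=4\alpha_J^2$. Comparing the value of the objective at the minimizer with its value at the doubled index produces the required lower bound, with the extremal case $Q=0$ giving the weakest (and thus the universal) bound.

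First I would observe that for fixed $(j_1,j_2)$ the quantity $\frac{\gamma_J^2}{\alpha_J^2}(\gamma_J^4+Q(j_3\pi)^2)$ is strictly increasing in $j_3\geq 1$, so any minimizer $J_r$ of \eqref{Rr} has $j_3=1$. Setting $a:=\alpha_{J_r}^2$ and $g(x):=\frac{(x+\pi^2)\bigl((x+\pi^2)^2+Q\pi^2\bigr)}{x}$, we therefore have $R_r=g(a)$, with the minimum taken over admissible values of $\alpha_J^2$.

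Next, since $J_r=(j_1^r,j_2^r,1)\in Z$ forces $J_r'=(2j_1^r,2j_2^r,1)\in Z$ with $\alpha_{J_r'}^2=4a$, the minimality gives $g(a)\leq g(4a)$. Clearing denominators, this reads
\begin{equation*}
4(a+\pi^2)\bigl((a+\pi^2)^2+Q\pi^2\bigr)\leq (4a+\pi^2)\bigl((4a+\pi^2)^2+Q\pi^2\bigr),
\end{equation*}
which after rearrangement becomes
\begin{equation*}
(4a+\pi^2)^3-4(a+\pi^2)^3\geq 3Q\pi^4.
\end{equation*}
A direct expansion of the left-hand side yields $60a^3+36a^2\pi^2-3\pi^6\geq 3Q\pi^4\geq 0$, so in particular
\begin{equation*}
20a^3+12a^2\pi^2\geq \pi^6.
\end{equation*}
Setting $c:=a/\pi^2$, the right-hand side becomes $20c^3+12c^2\geq 1$. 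Since $c\mapsto 20c^3+12c^2$ is strictly increasing on $(0,\infty)$, it suffices to exhibit the unique positive root $c_0$ of $20c^3+12c^2=1$ and verify $c_0=1/\bigl(2^{2/3}(2^{2/3}+1)\bigr)$.

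Finally, to identify $c_0$, put $u:=2^{2/3}$ so that $u^3=4$ and the candidate is $c_0=1/(u^2+u)$. A short computation gives
\begin{equation*}
20c_0^3+12c_0^2=\frac{20+12(u^2+u)}{(u^2+u)^3}=\frac{20+12u+12u^2}{u^3(u+1)^3}=\frac{20+12u+12u^2}{4(u^3+3u^2+3u+1)}=1,
\end{equation*}
using $4u^3=16$ and $4\cdot 1+4u^3=20$. Hence $c\geq c_0$, i.e., $a\geq \pi^2/\bigl(2^{2/3}(2^{2/3}+1)\bigr)$, and taking square roots gives the stated bound. The only nontrivial step in this plan is recognizing that the right comparison is $a\leftrightarrow 4a$ (the finest doubling one can always perform on the discrete lattice of admissible wave numbers); everything else is algebra, and the extremal role of $Q=0$ is transparent from the inequality $20a^3+12a^2\pi^2-\pi^6\geq 3Q\pi^4$, which is weakest precisely when $Q=0$.
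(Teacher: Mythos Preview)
Your proof is correct and is genuinely different from the paper's approach---and, in fact, cleaner.

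The paper argues by first claiming that the discrete minimizer $\alpha_{J_r}$ is nondecreasing in $Q$ (so that $Q=0$ is extremal), then reducing to a one-dimensional lattice by taking $L_2/L_1$ small so that $k_r=0$, and finally analyzing the 1D problem explicitly: it locates the crossover lengths $L(m)=((m+1)m)^{1/3}((m+1)^{2/3}+m^{2/3})^{1/2}$ at which $R_1(m\pi/L)=R_1((m+1)\pi/L)$ and checks that $m\pi/L(m)$ is minimized at $m=1$, yielding $\pi/L(1)=\pi/(2^{1/3}(2^{2/3}+1)^{1/2})$. The 2D case is then dismissed with a heuristic remark that having $k_r\neq 0$ only brings $\alpha_{J_r}$ closer to the continuous minimizer $x_r$.

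Your argument bypasses all of this. The observation that the admissible set is closed under index doubling $(j_1,j_2,1)\mapsto(2j_1,2j_2,1)$ immediately gives $g(a)\leq g(4a)$ without any reduction to 1D, and the $Q$-dependence drops out automatically because $(4a+\pi^2)-4(a+\pi^2)=-3\pi^2<0$ makes the $Q$-term work in your favor. The resulting cubic $20c^3+12c^2=1$ in $c=a/\pi^2$ is exactly the same threshold equation the paper's $L(1)$ encodes (since $R_1(\pi/L)=R_1(2\pi/L)$ at $L=L(1)$ is precisely $g(a)=g(4a)$ with $a=\pi^2/L(1)^2$), so both proofs land on the same extremal configuration, but yours reaches it directly and rigorously in full generality.
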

\begin{proof}
Fix $L_1,L_2>0$. Let 
$$ R_1(x,Q)=\frac{ (\pi^2+x^2)}{x^2}((\pi^2+x^2)^2+Q \pi^2).$$ 
Due to convexity of $R_1$ with respect to $x$, for fixed $Q$, $R_1$ has a unique global minimizer $x_r$ on $x>0$. Now, there are two numbers $\alpha_1,\alpha_2$ of the form 
\[\sqrt{j^2 \pi^2/L_1^2+k^2\pi^2/L_2^2}, \quad j,k\in\mathbb{Z}\] 
such that $\alpha_{J_r}=\alpha_1$ or $\alpha_{J_r}=\alpha_2$ where $\alpha_1\leq x_r \leq\alpha_2$ and there are no other numbers of this form between $\alpha_1,\alpha_2$.

Since $x_r$ is an increasing function of $Q$, $\alpha_{J_r}$ is also an increasing function of $Q$. So it suffices to assume that $Q=0$. In this case $x_r\approx 2.22$.
We write:
$$ \alpha_{J_r}^2=\frac{j_r^2 \pi^2}{L_1^2}+\frac{k_r^2 \pi^2}{L_2^2}.$$
Now fix $L_1$. First, we will consider that $L_2/L_1$ is sufficiently small so that $k_r=0$.  Define 
$$L(m)=((m+1)m)^{1/3}((m+1)^{2/3}+m^{2/3})^{1/2}, \quad m\in\mathbb{Z},m\geq 0.$$
We claim that if $L(m-1)<L\leq L(m)$ for $m\geq1$ then $j_r=m$, i.e. $\alpha_{J_r}=\frac{\pi m}{L}$. In fact, $j_r\in \mathbb{Z}$ is an increasing function of $L$ (when $k=0$) and it can be easily checked that
$$R_1\left(\frac{m \pi}{L(m+1)}\right)=R_1\left(\frac{(m+1)\pi}{L(m+1)}\right).$$
Thus, if $L(m-1)<L<L(m)$ then $$\alpha_{J_r}= \frac{m \pi}{L}\geq \frac{m \pi}{L(m)}\geq \frac{\pi}{L(1)}\geq\frac{\pi}{2^{1/3}(2^{2/3}+1)^{1/2}}\geq 1.55.$$

Now if $L_2/L_1$ is not small, then $k_r$ is not necessarily zero and $\alpha_{J_r}$ will be closer to $x_r$. Thus the proof of the lemma is finished.\end{proof}

Now, to prove the Corollary \ref{rollcorollary}, consider $b$ defined by \eqref{a,b}. We notice that the condition $b>0$ dissects into three conditions:
\begin{equation}\label{rollcond}
\alpha_{J}<\pi, \qquad \mathfrak{p}_2<\frac{\pi\left(2(\pi^2-\alpha_{J}^2)\right)^{1/2}}{\alpha_{J}\gamma_{J}},\qquad Q \pi^2>\frac{\mathfrak{p}_2^2\alpha_{J}^2\gamma_{J}^6}{2\pi^2(\pi^2-\alpha_{J}^2)-\mathfrak{p}_2^2\alpha_{J}^2\gamma_{J}^2}.
\end{equation}
To make use of \eqref{rollcond}, it is important to get bounds on the wave number $\alpha_J$. One can see that there is no upper bound on the wave number by letting $L_1,L_2\rightarrow0$.  However, Lemma \ref{minalpha} shows that the minimum wave number can not go below $\approx1.55$. Moreover by definition of $R_r$, see \eqref{Rr}, $\alpha_{J}$ will increase as $Q$ increases. And using a similar argument as in the proof of Lemma \ref{minalpha} we find that $\alpha_{J}>\pi$ for $Q>306$. Thus the transition is Type-I regardless of the container size when $Q>306$.
Next, using the fact that $\alpha_J>1.55$ and the second condition in \eqref{rollcond}, one sees that the transition is always Type-I if $\mathfrak{p}_2>2.24$ regardless of the wave number.

When at least one of the length scales $L_1,L_2$ is much larger than 1, then the wave number will be close to the absolute minimizer of \eqref{Rr}. The relation between $Q$ and the critical wave number is then found to be:
\begin{equation}\label{Qalpharoll}
Q \pi^4=-\pi^6+3\pi^2\alpha_J^4+2\alpha_J^6.
\end{equation}
The minimum $\alpha_J$ can be found to be $\alpha_J=\pi/\sqrt{2}$ by setting $Q=0$ above. Then the second condition in \eqref{rollcond} implies that the transition is Type-I if $\mathfrak{p}_2>2/\sqrt{3}>1.15$. Finally by \eqref{Qalpharoll} one finds that for $Q>4\pi^2$, $\alpha_J>\pi$. 

\subsection{Proof of Theorem \ref{hexthm}}
Let $I=(j,k,1)$ and $J=(0,2k,1)$ be the critical indices with $j,k\geq 1$.

Consider $S_1=(0,0,2)$, $S_2=(2j,2k,0)$, $S_3=(2j,0,0)$, $S_4=(0,2k,0)$,  $S_5=(0,4k,0)$, $S_6=(j,k,2)$, $S_7=(j,3k,2)$, $S_8=(j,3k,0)$, $S_9=(j,k,0)$, $S_{10}=(0,2k,2)$, $S_{11}=(2j,0,2)$. In this case the center manifold function is:
\begin{equation} \label{hex1}
\begin{aligned}
\Phi=x_I^2(\Phi_{S_1}^I e_{S_1}+\sum_{\substack{i=2,3,\\4,10,11}} \Phi_{S_i}e_{S_i})+x_Ix_J(\sum_{\substack{i=6,\\7,8,9}} \Phi_{S_i} e_{S_i})+x_J^2(\Phi_{S_1}^J e_{S_1}+\Phi_{S_5} e_{S_5}).
\end{aligned}
\end{equation}
After letting $x=x_I, y=x_J$, the reduced equations \eqref{red1}  become:
\begin{equation}\label{hexred1}
\begin{aligned}
&\frac{dx}{dt}=\beta x+x(a_1 x^2+a_2 y^2)+o(3),\\
&\frac{dy}{dt}=\beta y+y(b_1 x^2+b_2y^2)+o(3).
\end{aligned}
\end{equation}
The coefficients in \eqref{hexred1} can be found as:
\begin{equation}\label{hexcoefs}
\begin{aligned}
& a_1=\frac{1}{\langle \psi_I,\psi_I^{\ast}\rangle}\left(\Phi_{S_1}^I G_s(\psi_I,e_{S_1},\psi_I^{\ast})+\sum_{i=2,3,4,10,11} \Phi_{S_i}G_s(\psi_I,e_{S_i},\psi_I^{\ast})\right), \\
& a_2=\frac{1}{\langle \psi_I,\psi_I^{\ast}\rangle}\left(\Phi_{S_1}^J G_s(\psi_I,e_{S_1},\psi_I^{\ast})+\sum_{i=6,7,8,9} \Phi_{S_i}G_s(\psi_J,e_{S_i},\psi_I^{\ast})\right), \\
& b_1=\frac{1}{\langle \psi_J,\psi_J^{\ast}\rangle}\left(\Phi_{S_1}^I G_s(\psi_J,e_{S_1},\psi_J^{\ast})+\sum_{i=6,7,8,9} \Phi_{S_i}G_s(\psi_I,e_{S_i},\psi_J^{\ast})\right), \\
& b_2=\frac{1}{\langle \psi_J,\psi_J^{\ast}\rangle}\left(\Phi_{S_1}^J G_s(\psi_J,e_{S_1},\psi_J^{\ast})+\Phi_{S_5}G_s(\psi_J,e_{S_5},\psi_J^{\ast})\right).
\end{aligned}
\end{equation}

We can find the following relations:
\begin{equation}
\begin{aligned}
& G_s(\psi_J,e_{S_1},\psi_J^{\ast})=2 G_s(\psi_I,e_{S_1},\psi_I^{\ast}), \quad G_s(\psi_J,e_{S_5},\psi_J^{\ast})=4 G_s(\psi_I,e_{S_2},\psi_I^{\ast}), \\
& G_s(\psi_I,\psi_J,e_{S_6})=2 G(\psi_I,\psi_I,e_{S_{10}}), \quad G_s(\psi_I,\psi_J,e_{S_7})=2 G(\psi_I,\psi_I,\psi_{S_{11}}), \\
& G_s(\psi_I,e_{S_6},\psi_J^{\ast})= G_s(\psi_I,e_{S_{10}},\psi_I^{\ast}), \quad G_s(\psi_I,e_{S_7},\psi_J^{\ast})=  G_s(\psi_I,e_{S_{11}},\psi_I^{\ast}), \\
& G_s(\psi_J,e_{S_i},\psi_I^{\ast})=G_s(\psi_I,e_{S_i},\psi_J^{\ast}), \quad i=6,7,8,9.
\end{aligned}
\end{equation}

\begin{equation}
\begin{aligned}
& \Phi_{S_1}^J=2 \Phi_{S_1}^I, \Phi_{S_5}=2 \Phi_{S_2},  \\
& 4\sum_{i=3,4}\Phi_{S_i} G_s(\psi_I,e_{S_i},\psi_I^{\ast})=\sum_{i=8,9}\Phi_{S_i} G_s(\psi_J,e_{S_i},\psi_I^{\ast}), \\
& 4\sum_{i=6,7}\Phi_{S_i} G_s(\psi_I,e_{S_i},\psi_I^{\ast})=\sum_{i=10,11}\Phi_{S_i} G_s(\psi_J,e_{S_i},\psi_I^{\ast}).
\end{aligned}
\end{equation}
Using these relations, we can write \eqref{hex1} as:
\begin{equation}\label{hexfinal}
\begin{aligned}
& \frac{dx}{dt}=\beta x+x\left(ax^2+2(2a-b)y^2\right)+o(3), \\
& \frac{dy}{dt}=\beta y+y((2a-b)x^2+2b y^2)+o(3).
\end{aligned}
\end{equation}
Here $a=a_1$ and $2b=b_2$ in \eqref{hexcoefs}. After simplification of same positive terms appearing, we find that $a$ and $b$ are given by \eqref{a,b}.

The equations \eqref{hexfinal} have eight straight line orbits in total which are on the lines: x-axis, y-axis and $x=\pm2y$. The eight bifurcated solutions of \eqref{hexfinal} are given by:

\begin{equation}\label{hexbifsols}
\begin{aligned}
& Rec_{\pm}=\pm\sqrt{-\beta/2b}\,(0,1), \quad Roll_{\pm}=\pm\sqrt{-\beta/a}\,(1,0), \\
& Hex_{\pm}^{i}=(-1)^i\sqrt{-\beta/2(4a-b)}\,(2, \pm1), \quad i=1,2.
\end{aligned}
\end{equation}

Moreover the Jacobian matrix of the vector field \eqref{hexfinal} has the following eigenvalues at these bifurcated points:
\begin{equation}\label{regularity}
\begin{array}{cc}
 \lambda^1_{Roll_{\pm}}=-\frac{4\beta(a-b)}{b}, &  \lambda^2_{Roll_{\pm}}=-2\beta, \\
  \lambda^1_{Rec_{\pm}}=-\frac{4\beta(a-b)}{2a}, &  \lambda^2_{Rect_{\pm}}=-2\beta, \\
   \lambda^1_{Hex^{1,2}_{\pm}}=\frac{2\beta(a-b)}{4a-b}, &  \lambda^2_{Hex^{1,2}_{\pm}}=-2\beta. \\
\end{array}
\end{equation}
Using \eqref{hexfinal}-\eqref{regularity}, we see that the transition depends on the sign of $a$, $b$, $4a-b$ and $a-b$. These 4 lines separate the $a,b$ plane into eight regions. In each region, it is possible to find which solution is bifurcated on $\beta>0$, $\beta<0$ and the stability of these solutions. The analysis is given from Figure \ref{regions2} to Figure \ref{VIII}. These eight cases exhaust all possible transition scenarios. The proof is finished.

\subsection{Proof of Corollary \ref{hexcor}}
First notice that 
\[
\begin{aligned}
& \gamma_{2j ,0,2}^2=4\alpha_{j, 0}^2+4\pi^2<4\alpha_{j,k}^2+4\pi^2=4\gamma_{j,k,1}^2 \\
& \gamma_{0,2k,2}^2=4\alpha_{0,k}^2+4\pi^2<4\alpha_{j,k}^2+4\pi^2=4\gamma_{j,k,1}^2 \\
\end{aligned}
\]
Thus 
\[
\begin{aligned}
& \mathfrak{p}_2
\geq 8
\geq \frac{\gamma_{2j,0,2}^2}{\gamma_J^2}+4
\geq \frac{\gamma_{2j,0,2}^2}{\gamma_J^2}(1+4\frac{\gamma_{2j,0,2}^2}{\gamma_J^2})
\geq \frac{\gamma_{2j,0,2}^2}{\gamma_J^2}(1+4\frac{\gamma_{2j,0,2}^2}{\gamma_J^2}) \frac{\pi^2 Q}{\gamma_J^4+\pi^2 Q}\\
& \mathfrak{p}_1\pi^2 Q(1+4\frac{\gamma_J^2}{\gamma_{2j,0,2}^2})
\leq\mathfrak{p}_1\mathfrak{p}_2 \frac{\gamma_J^2}{\gamma_{2j,0,2}^2}(\gamma_J^4+\pi^2 Q)\\
& \mathfrak{p}_1\pi^2 Q(1+4\frac{\gamma_J^2}{\gamma_{2j,0,2}^2})-\mathfrak{p}_2\gamma_J^4
\leq\mathfrak{p}_1\mathfrak{p}_2 \frac{\gamma_J^2}{\gamma_{2j,0,2}^2}(\gamma_J^4+\pi^2 Q)
\end{aligned}
\]
Noticing that $R_R<R_{2j,0,2}$ we see that $\kappa_{2j,0}<0$ with $\kappa$ defined in \eqref{kappa}. The same argument shows also that $\kappa_{0,2k}<0$. By Lemma \ref{minalpha}, $\pi^2-5\alpha_J^2<0$. That shows that $a<0$ where $a$ is defined in \eqref{a,b}. Now by Corollary \ref{rollcorollary}, $b<0$ when $\mathfrak{p}_2>2.24$. We can write $a/b$ as:
\[
\frac{a}{b}=1+\frac{\pi^4 Q(\pi^2-5\alpha_J^2)-2\pi^4Q(\pi^2-\alpha_J^2)-\mathfrak{p}^2\alpha_J^2 R_R+\kappa_{2j,0}+\kappa_{0,2k}}{b}>1.
\]
Now, by Theorem \ref{hexthm}, the transition depends on $a/b$ and we see that the system is in Region $\mathcal{I}_2$ when $a/b>1, a<0,b<0$. That proves the corollary.
\subsection{Proof of Theorem \protect\ref{Complex MHD}}
Define $\beta \left( R\right) =\beta _{J_{c}}^{1}\left( R\right) =\lambda
\left( R\right) +i\rho \left( R\right) $ with $\lambda \left( R_{c}\right)
=0 $, $\rho \left( R_{c}\right) =\rho >0$. The corresponding eigenvectors
and conjugate eigenvectors are 
\begin{align*}
& \psi _{J_{c}} =\psi _{J_{c}}^{1}+i\psi _{J_{c}}^{2}\text{, \ \ }\psi
_{J_{c}}^{1}=Re\psi _{J_{c}}\text{, \ \ }\psi _{J_{c}}^{2}=Im\psi _{J_{c}},
\\
& \psi _{J_{c}}^{\ast } =\psi _{J_{c}}^{\ast 1}+i\psi _{J_{c}}^{\ast 2}\text{%
, \ \ }\psi _{J_{c}}^{\ast 1}=Re\psi _{J_{c}}^{\ast }\text{, \ \ }\psi
_{J_{c}}^{\ast 2}=Im\psi _{J_{c}}^{\ast }.
\end{align*}
The horizontal components of the critical eigenvectors can be captured as in \eqref{horizontals}. The amplitudes of the vertical velocity, the vertical magnetic field and the temperature of $\psi _{J_{c}}$, $\psi _{J_{c}}^{\ast }$ are given by: 
\begin{align}
\begin{split} \label{ev3}
& W_J =(\mathfrak{p}_2^{-1} \beta +\gamma _{J}^{2}) 
\alpha _{J}^{2}R,   \\
& \Theta_J =\omega_J(\beta)\gamma _{J}^{2},  \\
& H_{3,J} =\alpha _{J}^{2}l\pi R,  
\end{split}
\end{align}
\begin{align}
\begin{split} \label{ev*3} 
& W_{J}^{\ast} =\left( \overline{\beta }+\mathfrak{p}_2\gamma
_{J}^{2}\right) \mathfrak{p}_1^{-1}\alpha _{J}^{2},  \\
& \Theta_{J}^{\ast } =\mathfrak{p}_2\omega _{J}(\overline{\beta })\gamma _{J}^{2},  \\
& H_{3,J}^{\ast} =-l\pi Q\alpha _{J}^{2}. 
\end{split}
\end{align}
Here $\omega_J(\beta)=\left( \frac{\beta }{\mathfrak{p}_2}+\gamma
_{J}^{2}\right) \left( \frac{\beta }{\mathfrak{p}_1}+\gamma _{J}^{2}\right) +Q\left(
l\pi \right) ^{2}.$ Then
\begin{align*}
& L\psi _{J_{c}}^{1} =\lambda \psi _{J_{c}}^{1}-\rho \psi _{J_{c}}^{2},\quad
L\psi _{J_{c}}^{2}=\rho \psi _{J_{c}}^{1}+\lambda \psi _{J_{c}}^{2}, \\
& L\psi _{J_{c}}^{\ast 1} =\lambda \psi _{J_{c}}^{\ast 1}-\rho \psi
_{J_{c}}^{\ast 2},\quad L\psi _{J_{c}}^{\ast 2}=\rho \psi _{J_{c}}^{\ast
1}+\lambda \psi _{J_{c}}^{\ast 2},
\end{align*}
and
\begin{equation}
\left( \psi _{J_{c}}^{1},\psi _{J_{c}}^{\ast 1}\right) =\left( \psi
_{J_{c}}^{2},\psi _{J_{c}}^{\ast 2}\right) \text{, \quad }\left( \psi
_{J_{c}}^{1},\psi _{J_{c}}^{\ast 2}\right) =\left( \psi _{J_{c}}^{2},\psi
_{J_{c}}^{\ast 1}\right) .  \label{evrelt} 
\end{equation}

We define the eigenvectors
\begin{align}
& \Phi _{J_{c}}^{\ast 1}=\psi _{J_{c}}^{\ast 1}-C\psi _{J_{c}}^{\ast 2},\qquad
\Phi _{J_{c}}^{\ast 2}=C\psi _{J_{c}}^{\ast 1}+\psi _{J_{c}}^{\ast 2},\qquad
C=\frac{\left( \psi _{J_{c}}^{2},\psi _{J_{c}}^{\ast 1}\right) }{\left( \psi
_{J_{c}}^{2},\psi _{J_{c}}^{\ast 2}\right) }.  \label{phi*1}
\intertext{Then by (\ref{evrelt}),}
& \left( \psi _{J_{c}}^{1},\Phi _{J_{c}}^{\ast 2}\right) =\left( \psi
_{J_{c}}^{2},\Phi _{J_{c}}^{\ast 1}\right)=0 , \notag \\
& B :=\left( \psi _{J_{c}}^{1},\Phi _{J_{c}}^{\ast 1}\right) ^{-1}=\left(
\psi _{J_{c}}^{2},\Phi _{J_{c}}^{\ast 2}\right) ^{-1}\neq 0.\notag
\end{align}
Now we write $\psi =x\psi _{J_{c}}^{1}+y\psi _{J_{c}}^{2}+\Phi $ where $\Phi 
$ is the center manifold function, $x,y\in \mathbb{R}$. We can write the
reduced equations of (\ref{Eqn1}), (\ref{B.C.}) as 
\begin{equation}
\begin{aligned} \label{complex-reduced-1} 
& \frac{dx}{dt} =\lambda x+\rho y+B\left( G\left( \psi ,\psi \right) ,\Phi
_{J_{c}}^{\ast 1}\right) ,  \\
& \frac{dy}{dt} =-\rho x+\lambda y+B\left( G\left( \psi ,\psi \right) ,\Phi
_{J_{c}}^{\ast 2}\right) 
\end{aligned}
\end{equation}
We have the following approximation formula of the center manifold function, see 
\cite{ptd}. For $z=x\psi _{J_{c}}^{1}+y\psi _{J_{c}}^{2}\in E_{1}$, 
\begin{align*}
\left( \left( -\mathcal{L}_{R}\right) ^{2}+4\rho ^{2}\right) \left( -%
\mathcal{L}_{R}\right) \Phi \left( z,R\right)  =& \left( \left( -\mathcal{L}%
_{R}\right) ^{2}+4\rho ^{2}\right) P_{2}G\left( z,R\right) \\
& -2\rho ^{2}P_{2}G\left( z,R\right) +2\rho ^{2}P_{2}G\left( x\psi
_{J_{c}}^{2}-y\psi _{J_{c}}^{1},R\right) \\
& +\rho \left( -\mathcal{L}_{R}\right) \left( G\left( x\psi
_{J_{c}}^{1}+y\psi _{J_{c}}^{2},y\psi _{J_{c}}^{1}-x\psi
_{J_{c}}^{2},R\right) \right) \\
& +G\left( y\psi _{J_{c}}^{1}-x\psi _{J_{c}}^{2},x\psi _{J_{c}}^{1}+y\psi
_{J_{c}}^{2},R\right) +o\left( 2\right) ,
\end{align*}%
where%
\begin{equation*}
o\left( 2\right) =o\left( \left\Vert z\right\Vert ^{2}\right) +O\left(
\left\vert \lambda \left( R\right) \right\vert \left\Vert z\right\Vert
^{2}\right) .
\end{equation*}
Here $\mathcal{L}_{R}=L_{R}\mid _{E_{2}}$ for $H=E_{1}\oplus E_{2}$, $%
E_{1}=span\{\psi _{J_{c}}^{1},\psi _{J_{c}}^{2}\}$, $E_{2}=E_{1}^{\perp }$.
As in the real case:
\begin{equation}  \label{complex-nonlinear-result-2}
\left( G\left( \psi _{J_{c}}^{i},\psi _{J_{c}}^{j}\right) ,\psi _{J}^{\ast
}\right) =0 \text{   if $J\neq \left( 0,0,2\right) $ or $J\neq \left( 2j_{c},2k_{c},0\right) $}. 
\end{equation}
By (\ref{complex-nonlinear-result-2}) and
the above formula, we have the following approximation for the center manifold
\begin{equation} \label{complex-center-manifold}
\Phi \left( z,R\right) =\Phi _{1}\psi _{002}+\Phi _{2}\psi
_{2j_{c}2k_{c}0}+o\left( 2\right) ,
\end{equation}%
Here $\Phi_1$ and $\Phi_2$ are given by:
\begin{equation*}
\begin{aligned}
& \Phi _{1}=A_{1}x^{2}+A_{2}xy+A_{3}y^{2},  \\
& \Phi _{2}= A_{4}x^{2}+A_{5}xy+A_{6}y^{2} .
\end{aligned}
\end{equation*}
To compute $A_i$, we  list the necessary results 
\begin{align}
& g_{ij}^{1} :=\left( G\left( \psi _{J_{c}}^{i},\psi _{J_{c}}^{j}\right)
,\psi _{002}^{\ast }\right) ,  \label{complex-nonlinear-result-1}\\
& g_{11}^{1} =- ReL_{\beta} Re\omega_{\beta} , \qquad
  g_{12}^{1} =- ReL_{\beta} Im\omega_{\beta} , \notag \\
& g_{21}^{1} =- ImL_{\beta} Re\omega_{\beta} , \qquad
  g_{22}^{1} =- ImL_{\beta} Im\omega_{\beta} ; \notag \\
\intertext{ }
& g_{ij}^{2} :=\left( G\left( \psi _{J_{c}}^{i},\psi _{J_{c}}^{j}\right)
,\psi _{2j_{c}2k_{c}0}^{2\ast }\right) ,  \label{complex-nonlinear-result-1*}\\
& g_{11}^{2} =-4Q^{-1}\alpha_{J_{c}}^2R ReK_{\beta} ,  \qquad
  g_{12}^{2} =-2Q^{-1}\alpha_{J_{c}}^2R ImK_{\beta} ,  \notag \\
& g_{21}^{2} =g_{12}^{2},  \qquad
  g_{22}^{2} =0;  \notag \\
\intertext{ }
& c_{ij}^{1} :=\left( G\left( \psi _{J_{c}}^{i},\psi _{002}\right) ,\psi
_{J_{c}}^{\ast j}\right) ,  \label{complex-nonlinear-result-4}\\
& c_{11}^{1} = \mathfrak{p}_2 ReL_{\beta} Re\omega_{\overline\beta} ,  \qquad 
  c_{12}^{1} = \mathfrak{p}_2 ReL_{\beta} Im\omega_{\overline\beta} ,  \notag \\
& c_{21}^{1} = \mathfrak{p}_2 ImL_{\beta} Re\omega_{\overline\beta} ,  \qquad
  c_{22}^{1} = \mathfrak{p}_2 ImL_{\beta} Im\omega_{\overline\beta} ;  \notag
\intertext{ }
& c_{ij}^{2} :=\left( G\left(\delta_{\beta}{J_{c}}^{i},\psi _{2j_{c}2k_{c}0}\right)
,\psi _{J_{c}}^{\ast j}\right) ,  \label{complex-nonlinear-result-4*} \\
& c_{11}^{2} =4\alpha_{J_{c}}^2 Re K_{\beta} ,  \qquad
  c_{12}^{2} =2\alpha_{J_{c}}^2 Im K_{\overline{\beta}} , \notag \\
& c_{21}^{2} =2\alpha_{J_{c}}^2 ImK_{\beta} ,  \qquad
  c_{22}^{2} =0;  \notag
\intertext{ }
& d_{ij} :=\left( G\left( \psi _{2j_{r}2k_{r}0},\psi _{J_{c}}^{i}\right)
,\psi _{J_{c}}^{\ast j}\right) ,  \label{complex-nonlinear-result-5*} \\
& d_{11} =2 ( \alpha_{J_{c}}^{2}-\pi ^{2}) Re K_{\beta} ,  \qquad 
  d_{12} =( \alpha_{J_{c}}^{2}-\pi ^{2}) Im K_{\overline{\beta}},  \notag \\
& d_{21} =( \alpha_{J_{c}}^{2}-\pi ^{2}) Im K_{\beta} ,  \qquad
  d_{22} =0.  \notag
\intertext{Here}
& \omega _{\beta}=\left( \frac{\beta }{\mathfrak{p}_2}%
+\gamma _{J_{c}}^{2}\right) \left( \frac{\beta }{\mathfrak{p}_1}+\gamma
_{J_{c}}^{2}\right) +Q\pi ^{2}, \notag \\
& \delta _{\beta}= \beta +\mathfrak{p}_2\gamma _{J_{c}}^{2},  \notag \\
& K_{\beta}=-\frac{L_{1}L_{2}}{8}QR\alpha _{J_{c}}^{2}\pi ^{2} \mathfrak{p}_2^{-1} \delta_{\beta},  \notag \\
& L_{\beta}=\frac{L_{1}L_{2}}{2}\gamma _{J_{c}}^{2}\alpha_{J_{c}}^{2}R\pi \mathfrak{p}_2^{-1} \delta_{\beta}. \notag
\end{align}
Using these results, the coefficients of the center manifold function can be computed as follows:
\begin{align*}
& A_{1} =\frac{\left( g_{11}^{1}\left( 16\pi ^{4}+2\rho ^{2}\right)
+g_{22}^{1}2\rho ^{2}-g_{12}^{1}\rho 4\pi ^{2}-g_{21}^{1}\right) }{4\pi
^{2}\left( 16\pi ^{4}+4\rho ^{2}\right) }, \\
& A_{2} =\frac{\left( \left( g_{12}^{1}+g_{21}^{1}\right) 16\pi ^{4}+\left(
g_{11}^{1}-g_{22}^{1}\right) \left( \rho 4\pi ^{2}+1\right) \right) }{4\pi
^{2}\left( 16\pi ^{4}+4\rho ^{2}\right) }, \\
& A_{3} =\frac{\left( g_{22}^{1}\left( 16\pi ^{4}+2\rho ^{2}\right)
+g_{11}^{1}2\rho ^{2}+g_{21}^{1}\rho 4\pi ^{2}+g_{12}^{1}\right) }{4\pi
^{2}\left( 16\pi ^{4}+4\rho ^{2}\right) }, \\
& A_{4} =\frac{\left( 16\mathfrak{p}_2^{2}\alpha _{J_{c}}^{4}+2\rho
^{2}\right) g_{11}^{2}-\left( 4\rho \mathfrak{p}_2\alpha
_{J_{c}}^{2}+1\right) g_{12}^{2}}{4\mathfrak{p}_2\alpha _{J_{c}}^{2}\left( 16%
\mathfrak{p}_2^{2}\alpha _{J_{c}}^{4}+4\rho ^{2}\right) }, \\
& A_{5} =\frac{\left( 4\rho \mathfrak{p}_2\alpha _{J_{c}}^{2}+1\right)
g_{11}^{2}+32\mathfrak{p}_2^{2}\alpha _{J_{c}}^{4}g_{12}^{2}}{4\mathfrak{p}_2%
\alpha _{J_{c}}^{2}\left( 16\mathfrak{p}_2^{2}\alpha _{J_{c}}^{4}+4\rho
^{2}\right) }, \\
& A_{6} =\frac{2\rho ^{2}g_{11}^{2}+\left( 4\rho \mathfrak{p}_2\alpha
_{J_{c}}^{2}+1\right) g_{12}^{2}}{4\mathfrak{p}_2\alpha _{J_{c}}^{2}\left( 16%
\mathfrak{p}_2^{2}\alpha _{J_{c}}^{4}+4\rho ^{2}\right) }.
\end{align*}

Plugging\eqref{complex-center-manifold} into (\ref{complex-reduced-1}), we obtain the following ODE:
\begin{equation}
\begin{aligned} &\frac{dx}{dt} =\lambda x+\rho
y+a_{30}^{1}x^{3}+a_{21}^{1}x^{2}y+a_{12}^{1}xy^{2}+a_{03}^{1}y^{3}+o(3) ,
\\ &\frac{dy}{dt} =-\rho x+\lambda
y+a_{30}^{2}x^{3}+a_{21}^{2}x^{2}y+a_{12}^{2}xy^{2}+a_{03}^{2}y^{3}+o(3).
\end{aligned}  \label{complex-reduced-2}
\end{equation}%
Using the approximation (\ref{complex-center-manifold}), we get
\begin{equation} \label{comp-non-1}
\begin{aligned}
\left( G\left( \psi ,\psi \right) ,\Phi _{J_{c}}^{\ast 1}\right) =\, &x\Phi
_{1}\left( c_{11}^{1}-Cc_{12}^{1}\right) +x\Phi _{2}\left(
c_{11}^{2}-Cc_{12}^{2}\right)+ \\
&y\Phi _{1}\left( c_{21}^{1}-Cc_{22}^{1}\right) +y\Phi _{2}\left(
c_{21}^{2}-Cc_{22}^{2}\right)+ \\
&x\Phi _{2}\left( d_{11}-Cd_{12}\right) +y\Phi _{2}\left(
d_{21}-Cd_{22}\right)+o(3) ,
\end{aligned}
\end{equation}
\begin{equation} \label{comp-non-2}
\begin{aligned} 
\left( G\left( \psi ,\psi \right) ,\Phi _{J_{c}}^{\ast 2}\right) =\, &x\Phi
_{1}\left( Cc_{11}^{1}+c_{12}^{1}\right) +x\Phi _{2}\left(
Cc_{11}^{2}+c_{12}^{2}\right)+ \\
&y\Phi _{1}\left( Cc_{21}^{1}+c_{22}^{1}\right) +y\Phi _{2}\left(
Cc_{21}^{2}+c_{22}^{2}\right)+ \\
&x\Phi _{2}\left( Cd_{11}+d_{12}\right) +y\Phi _{2}\left(
Cd_{21}+d_{22}\right)+o(3) .
\end{aligned}
\end{equation}

Now we give the coefficients of the reduced equation \eqref{complex-reduced-2}. Using (\ref{complex-center-manifold}), (\ref{comp-non-1}) and (\ref{comp-non-2}), the coefficients are computed by plugging  (\ref{comp-non-1}), (\ref{comp-non-2}) into (\ref{complex-reduced-1}) and are given by:
\begin{align*}
& a_{30}^{1} =B( A_{1}( c_{11}^{1}-Cc_{12}^{1}) +A_{4} X ), \\
& a_{21}^{1} = B ( A_{1}( c_{21}^{1}-Cc_{22}^{1}) +A_{2}(c_{11}^{1}-Cc_{12}^{1}) + A_{4}( c_{21}^{2}+d_{21})  +A_{5} X ) ,\\
& a_{12}^{1} = B( A_{2}( c_{21}^{1}-Cc_{22}^{1}) +A_{3}(c_{11}^{1}-Cc_{12}^{1}) +A_{5}( c_{21}^{2}+d_{21}) +A_{6} X ) , \\
& a_{03}^{1} =B( A_{3}( c_{21}^{1}-Cc_{22}^{1}) +A_{6}(c_{21}^{2}+d_{21}) ) , \\
& a_{30}^{2} =B( A_{1}( Cc_{11}^{1}+c_{12}^{1}) +A_{4} Y, \\
& a_{21}^{2} =B( A_{1}( Cc_{21}^{1}+c_{22}^{1}) +A_{2}(Cc_{11}^{1}+c_{12}^{1}) +A_{4}C( c_{21}^{2}+d_{21})+A_{5} Y , \\
& a_{12}^{2} =B( A_{2}( Cc_{21}^{1}+c_{22}^{1}) +A_{3}(Cc_{11}^{1}+c_{12}^{1}) +A_{5}C( c_{21}^{2}+d_{21})+A_{6} Y , \\
& a_{03}^{2} =B( A_{3}( Cc_{21}^{1}+c_{22}^{1})+A_{6}C( c_{21}^{2}+d_{21}) ), \\
& X= c_{11}^{2}+d_{11}-C( c_{12}^{2}+d_{12}),  \\
& Y= c_{12}^{2}+d_{12}+C( c_{11}^{2}+d_{11}). 
\end{align*}

The transition of (\ref{Eqn1})-(\ref{B.C.}) is determined by the sign of the
following number at $R=R_{c}$; see \cite{ptd},%
\begin{equation}
\frac{3\pi }{4}\left( a_{30}^{1}+a_{03}^{2}\right) +\frac{\pi }{4}\left(
a_{12}^{1}+a_{21}^{2}\right) .  \label{firstb}
\end{equation}%
which has the same sign as $b$ defined by:
\begin{equation}  \label{b1}
b=\frac{D_{1}+D_{2}}{\pi ^{2}\left( 16\pi ^{4}+4\rho ^{2}\right) }+\frac{%
Q\pi \left( -3\alpha _{J_{c}}^{2}+\pi ^{2}\right) \pi R_{c}}{2\mathfrak{p}_2%
\gamma _{J_{c}}^{2}\left( 16\mathfrak{p}_2^{2}\alpha _{J_{c}}^{4}+4\rho
^{2}\right) }D_{3}.
\end{equation}
Here
\begin{align*}
& D_{1}=2\mathfrak{p}_2\left( 3\gamma _{J_{c}}^{2}A_{1}+\rho A_{2}+\gamma
_{J_{c}}^{2}A_{3}\right) \left( E_{1}\psi _{11}+E_{2}\psi _{21}\right) , \\
& D_{2}=2\mathfrak{p}_2\left( \rho A_{1}+\gamma _{J_{c}}^{2}A_{2}+3\rho
A_{3}\right) \left( E_{1}\psi _{21}-E_{2}\psi _{11}\right) , \\
& D_{3}=A_{4} (3\psi _{11}+2\psi _{21}\rho \mathfrak{p}_2^{-1}\gamma
_{J_{c}}^{-2}) +A_{5}\psi _{21}+A_{6} (\psi _{11}+2 \psi _{21} \rho \mathfrak{p}_1^{-1}\gamma _{J_{c}}^{-2}) ,\\
& E_{1} =( \mathfrak{p}_2+\mathfrak{p}_1) ( \gamma _{J_{c}}^{4}\mathfrak{p}_1^{-1}+Q\pi ^{2}(\mathfrak{p}_1+1)^{-1}) , \\
& E_{2} =( \mathfrak{p}_2\mathfrak{p}_1) ^{-1} (\mathfrak{p}_2+\mathfrak{p}_1) \rho \gamma _{J_{c}}^{2},
\end{align*}
\begin{align*}
& A_{1} =-\left( \left( 16\pi ^{4}+2\rho ^{2}\right) \gamma
_{J_{c}}^{2}E_{1}+2\mathfrak{p}_2^{-1}\rho E_{2}\rho ^{2}-4\gamma
_{J_{c}}^{2}E_{2}\rho \pi ^{2}-E_{1}\mathfrak{p}_2^{-1}\rho \right) , \\
& A_{2} =-\left( 16\pi ^{4}\left( E_{2}\gamma _{J_{c}}^{2}+\mathfrak{p}_2%
^{-1}\rho E_{1}\right) +\left( \gamma _{J_{c}}^{2}E_{1}-\mathfrak{p}_2%
^{-1}\rho E_{2}\right) \left( 4\rho \pi ^{2}+1\right) \right) , \\
& A_{3} =-\left( \mathfrak{p}_2^{-1}\rho E_{2}\left( 16\pi ^{4}+2\rho^{2}\right) +2\rho ^{2}\gamma _{J_{c}}^{2}E_{1}+4\pi ^{2}\mathfrak{p}_2^{-1}
\rho ^{2}E_{1}+\gamma _{J_{c}}^{2}E_{2}\right) ,\\
& A_{4} =2\gamma _{J_{c}}^{2}\left( 16\mathfrak{p}_2^{2}\alpha_{J_{c}}^{4}+2\rho ^{2}\right) -\left( 4\rho \mathfrak{p}_2\alpha_{J_{c}}^{2}+1\right) \rho , \\
& A_{5} =2\gamma _{J_{c}}^{2}\left( 4\rho \mathfrak{p}_2\alpha_{J_{c}}^{2}+1\right) +32\mathfrak{p}_2^{2}\rho , \\
& A_{6} =4\rho ^{2}\gamma _{J_{c}}^{2}+\left( 4\rho \mathfrak{p}_2\alpha_{J_{c}}^{2}+1\right) \rho ,\\
& \psi _{11} =-( ( \mathfrak{p}_2\mathfrak{p}_1) ^{-1}\rho ^{2}\alpha_{J_{c}}^{2}R_{c}+\mathfrak{p}_2E_{2}^{2}\gamma _{c}^{2}) , \\
& \psi _{21} =( \mathfrak{p}_1^{-1}\rho \alpha _{J_{c}}^{2}R_{c}+\mathfrak{p}_2 E_{1}E_{2}) \gamma _{J_c}^{2}.
\end{align*}%

Finally from \eqref{Rc}, we notice that
\begin{equation}
\gamma_{J_{c}}^{2}=O\left( Q^{1/3}\right) \text{ as }Q\rightarrow \infty .
\end{equation}
Using this, we see that, as $Q\rightarrow \infty $,%
\begin{equation}\label{Qinfty}
\gamma _{J_{c}}^{2}\rightarrow cQ^{1/3},\rho ^{2}\rightarrow \frac{\mathfrak{p}_1%
\mathfrak{p}_2\left( 1-\mathfrak{p}_2\right) \pi ^{2}}{\mathfrak{p}_1+1}Q.
\end{equation}
for $c>0$. Plugging \eqref{Qinfty} into the expression $b$ defined in (\ref{b1}),we see
that $b<0$ as $Q\rightarrow \infty $. Also, it can be shown that the same result holds in the limit of small $\rho$.
Theorem \ref{Complex MHD} is proved. 
\section{Physical Remarks and Conclusions}
In this work, we investigate several transition scenarios of the magnetohydrodynamics (MHD) equations. As is well known, for (MHD) due to non-selfadjoint linear opeartor,  the transition can be caused by a finite set of real or nonreal eigenvalues crossing zero. 

When the first eigenvalue is real and simple, the transition depends on the character of the first critical index. In this case, the transition can only be Type-I or Type-II depending on a number exactly given by \eqref{a,b}. In particular, when the first critical eigenmode has a roll structure, the type of transition is independent of the Prandtl number $\mathfrak{p}_1$. In this case, the transition is Type-I if $Q>Q_{\ast}$ or $\mathfrak{p}_2>\mathfrak{p}_{\ast}$ where $Q_{\ast}$ and $\mathfrak{p}_{\ast}$ depends on the length scales of the domain. We find that $Q_{\ast}<307$ and $\mathfrak{p}_{\ast}<2.24$ regardless of the length scales of the domain. As $\max\{L_1,L_2\}\uparrow\infty$, $Q_{\ast}\downarrow4\pi^2$ and $\mathfrak{p}_{\ast}\downarrow2/\sqrt{3}$.

Next we study the case where there are two critical real eigenvalues. In this case we only consider the special geometry 
\[
\frac{L_1}{L_2}=\frac{j}{k \sqrt{3}}. \] with positive integers $j$, $k$ and $L_1$, $L_2$ denoting the horizontal length scales of the box.
With this assumption, it is possible that  two modes which can characterize a hexagon pattern become unstable at the same critical parameter. In this case we find that all types of transitions are possible in a total of eight different transition scenarios. However, in our numerical investigation, we encountered only two of these scenarios. We find that when $Q<Q_{\ast}$ and $\mathfrak{p}_2<\mathfrak{p}_{\ast}$ the system moves from a Type-III transition regime to a Type-I regime as $\mathfrak{p}_2$ crosses $\mathfrak{p}_{\ast}$. The minimal attractors in the stable domain of the Type-III transition regime have a steady rectangular pattern. In the Type-I transition regime, the rectangles and rolls are minimal attractors, and hexagons are unstable patterns after the transition. In this case we prove that for $\mathfrak{p}_2\geq 8$, the transition is always Type-I with rolls and rectangles as stable patterns and hexagons as unstable patterns after the transition. However $\mathfrak{p}_2\geq 8$ is a crude estimate and our numerical investigation suggests that this type of Type-I transition will be preferred for $\mathfrak{p}_2\geq\mathfrak{p}_2^{\ast}$ where $\mathfrak{p}_2^{\ast}<2.24$.

Finally, we consider the case where the first eigenvalue is simple and nonreal. This is always the case when $\mathfrak{p}_2<1$ and $Q>Q_0$. We only consider a roll type critical eigenmode. In this case the transition can be Type-I or Type-II. In particular for $Q$ sufficiently large or when the oscillation frequency $\rho$ is sufficiently small, the transition is Type-I and the transition structure is a time periodic roll pattern.

\bibliographystyle{amsplain}
\bibliography{mhd}

\end{document}